\def\N{{\mathbb{N}}}
\def\Z{{\mathbb{Z}}}
\def\A{{\mathcal{A}}}
\def\B{{\mathcal{B}}}
\def\C{{\mathcal{C}}}
\def\T{{\mathcal{T}}}
\def\S{{\mathcal{S}}}
\def\P{{\mathcal{P}}}
\begin{document}

\title {Higher-dimensional automata modeling shared-variable systems}

\author{Thomas Kahl}

\address{Centro de Matem\'atica,
	Universidade do Minho, Campus de Gualtar,
	4710-057 Braga,
	Portugal
}

\thanks{This research was financed by Portuguese funds through FCT - Funda\c c\~ao para a Ci\^encia e a Tecnologia (project UID/MAT/00013/2013).}

\email{kahl@math.uminho.pt}

\keywords{Higher-dimensional automata, transition system, shared-variable system, tensor product, interleaving}

\begin{abstract}
The purpose of this paper is to provide a construction to model shared-variable systems using higher-dimensional automata which is compositional in the sense that the parallel composition of completely independent systems is modeled by the standard tensor product of HDAs and nondeterministic choice is represented by the coproduct. 	
\end{abstract}

\maketitle 

\section{Introduction}

Higher-dimensional automata provide a rich model for concurrent systems. A higher-dimensional automaton (HDA) is a precubical set with an initial state, a set of final states, and a labeling on 1-cubes such that opposite edges of 2-cubes have the same label. An HDA is thus a transition system (or an ordinary automaton) with incorporated squares and cubes of higher dimensions. An $n$-cube in an HDA indicates that the $n$ actions starting at its origin are independent in the sense that they may be executed in any order, or even simultaneously, without any observable difference. The concept of higher-dimensional automaton goes back to Pratt \cite{Pratt}. The notion defined here is due to van Glabbeek  \cite{vanGlabbeek}.

A natural approach to constructing an HDA from a transition system, which has been described in various places in the literature (see e.g. \cite{GaucherProcess, GaucherCombinatorics, vanGlabbeek, GoubaultLabCubATS, GoubaultMimram}), is to fill in empty squares and higher-dimensional cubes. In order to realize this strategy concretely, one needs to specify not only which cubes to fill in but also how to fill in a cube. To make this more precise, suppose that we consider the actions $a$ and $b$ in the empty square depicted in Figure \ref{fig1a} 
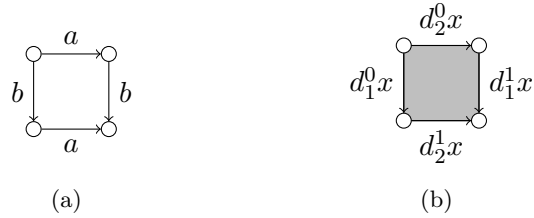
\begin{figure}[t]{}	
	\centering
	\begin{subfigure}[]
	{ 
		
		\begin{tikzpicture}[initial text={},on grid]  
		\path(0,-5.75mm);
					    
		\node[state,minimum size=0pt,inner sep =2pt,fill=white] (p_0)   {}; 
		
		\node[state,minimum size=0pt,inner sep =2pt,fill=white] (p_2) [right=of p_0,xshift=0cm] {};
		
		\node[state,minimum size=0pt,inner sep =2pt,fill=white,initial where=above,initial distance=0.2cm] [above=of p_0, yshift=0cm] (p_3)   {};

		\node[state,minimum size=0pt,inner sep =2pt,fill=white] (p_5) [right=of p_3,xshift=0cm] {}; 
		
		\path[->] 
		(p_0) edge[below] node {$a$} (p_2)
		(p_3) edge[above]  node {$a$} (p_5)
		(p_3) edge[left]  node {$b$} (p_0)
		(p_5) edge[right]  node {$b$} (p_2);

		\end{tikzpicture}
		\label{fig1a} 
	} 
\end{subfigure} 
	\hspace{2cm}
\begin{subfigure}[] 
	{
		\begin{tikzpicture}[initial text={},on grid]

		\path[draw, fill=lightgray] (0,0)--(1,0)--(1,1)--(0,1)--cycle;

		\node[state,minimum size=0pt,inner sep =2pt,fill=white] (q_0)   {}; 
		
		\node[state,minimum size=0pt,inner sep =2pt,fill=white] (q_2) [right=of q_0,xshift=0cm] {};

		\node[state,minimum size=0pt,inner sep =2pt,fill=white] [above=of q_0, yshift=0cm] (q_3)   {};

		\node[state,minimum size=0pt,inner sep =2pt,fill=white] (q_5) [right=of q_3,xshift=0cm] {}; 
		
		\path[->] 
		(q_0) edge[below] node {$d^1_2x$} (q_2)
		(q_3) edge[above]  node {$d^0_2x$} (q_5)
		(q_3) edge[left]  node {$d^0_1x$} (q_0)
		(q_5) edge[right]  node {$d^1_1x$} (q_2);

		\end{tikzpicture}
		\label{fig1b}
	}  
\end{subfigure} 	
	\caption{An empty independence square and a 2-cube $x$ with its boundary edges} 
\end{figure} 
to be independent and intend to fill in the square. Then we have to adjoin a 2-cube $x$ to the square and define its boundary edges $d^0_1x$, $d^0_2x$, $d^1_1x$, and $d^1_2x$ (see Figure \ref{fig1b} for a picture). There are two possibilities for this because we can choose which of the front faces $d^0_1x$ and $d^0_2x$ has label $a$ and which one has label $b$. To fill in the square, we may use either or both possibilities and attach one or two (or even more) 2-cubes.

Since there is no canonical way to attach a 2-cube to an empty square, most methods to construct HDAs from transition systems are based on the symmetric approach to fill in squares always in both possible ways, and it has been shown that this is well suited to many purposes  \cite{GaucherCombinatorics, vanGlabbeek, GoubaultLabCubATS, GoubaultMimram}. A drawback of this symmetric way to proceed is, however, that it leads to a certain redundancy in the representation of the independence relation: the independence of $n$ actions is represented by not just one but a significantly larger number of $n$-cubes. Moreover, the symmetric construction does not explain very well the intuition that the usual tensor product of HDAs models the parallel composition of completely independent concurrent systems.

In this paper we focus on transition systems and HDAs modeling finite systems of asynchronously executing processes which communicate through shared variables. Given a transition system representing the state space of such a concurrent system, a canonical choice between the two possibilities to fill in an empty independence square is possible if process IDs are taken into account: we may attach a 2-cube always in such a way that the process behind the action of the first front edge has a higher process ID than the one associated with the second front edge. Based on this attaching rule for 2-cubes, one can set up a filling procedure to construct HDAs from transition system models of  shared-variable programs. In an HDA constructed using this method the independence of $n$ actions at a state is represented by a unique $n\mbox{-}$cube (at least if the system under consideration consists of deterministic processes). Thus the independence relation is described in a nonredundant fashion. Moreover, the construction of the HDA is compositional in the sense that the tensor product of HDAs models independent parallel composition and the coproduct represents nondeterministic choice. The main objective of this paper is to establish these compositionality properties of the construction.

Process IDs are a means to order pairs of independent actions. In other words, they turn independence relations asymmetric. It turns out that instead of using process IDs, we may more abstractly work with asymmetric or even plain relations. To do so, we introduce $\ltimes$-transition systems, which are transition systems with a relation on labels that does not need to have any properties. The state space of a shared-variable system can be modeled as a $\ltimes$-transition system with an asymmetric relation, and we shall provide more details for systems of program graphs \cite{BaierKatoen, MannaPnueli} later on in the paper. 

The fundamental material on $\ltimes $-transition systems is presented in Section \ref{ltimes}. The earlier Section \ref{SecPrel} contains preliminaries on precubical sets and higher-dimensional automata. For every $\ltimes $-transition system, we construct an HDA model by filling in empty cubes. The construction is functorial and has a left adjoint. HDA models are the subject of Section \ref{SecHDAmod}. In Section \ref{SecParallel} we define an interleaving operator for $\ltimes$-transition systems and show that it models the parallel composition of shared-variable systems without common variables. Our main result is Theorem \ref{mainresult}, which states that the HDA model of the interleaving of two $\ltimes$-transition systems is the tensor product of their HDA models. In Section \ref{SecChoice} we turn our attention to nondetermistic choice and show that our HDA model construction preserves coproducts.  The final section contains a few concluding remarks.

\section{Precubical sets and higher-dimensional automata} \label{SecPrel}

This section presents some well-known material about precubical sets and higher-dimensional automata.

\subsection{Precubical sets} \label{precubs}

A \emph{precubical set} is a graded set $P = (P_n)_{n \geq 0}$ with  \emph{boundary} or \emph{face operators} $d^k_i\colon P_n \to P_{n-1}$ $(n>0,\;k= 0,1,\; i = 1, \dots, n)$ satisfying the relations $d^k_i\circ d^l_{j}= d^l_{j-1}\circ d^k_i$ $(k,l = 0,1,\; i<j)$. If $x\in P_n$, we say that $x$ is of  \emph{degree} $n$ and write $\deg(x) = n$. The elements of degree $n$ are called the \emph{$n$-cubes} of $P$. The elements of degree $0$ are also called the \emph{vertices} of $P$, and the $1$-cubes are also called the \emph{edges} of $P$. A \emph{precubical subset} of a precubical set $P$ is a graded subset of $P$ that is stable under the boundary operators. A \emph{morphism} of precubical sets is a morphism of graded sets that is compatible with the boundary operators. The category of precubical sets will be denoted by ${{\mathsf{PCS}}}$. 

The category of precubical sets can be seen as the presheaf category of functors $\Box^{\mbox{\tiny op}} \to {\mathsf{Set}}$ where $\Box$ is the small subcategory of the category of topological spaces whose objects are the standard $n$-cubes $[0,1]^n$ $(n \geq 0)$ and whose nonidentity morphisms are composites of the maps $\delta^k_i\colon [0,1]^n\to [0,1]^{n+1}$ ($k \in \{0,1\}$, $n \geq 0$, $i \in  \{1, \dots, n+1\}$) given by $\delta_i^k(u_1,\dots, u_n)= (u_1,\dots, u_{i-1},k,u_i \dots, u_n)$.

\subsection{Tensor product of precubical sets}

Given two precubical sets $P$ and $Q$, the \emph{tensor product} $P\otimes Q$ is the precubical set defined as follows: The underlying graded set is given by \[(P\otimes Q)_n = \coprod \limits_{p+q = n} P_p\times Q_q.\] For an $n$-cube $(x,y) \in P_p\times Q_q \subseteq (P\otimes Q)_n$, the boundary operators are defined by
\[d_i^k(x,y) = \left\{ \begin{array}{ll} (d_i^kx,y), & 1\leq i \leq p,\\
(x,d_{i-p}^ky), & p < i \leq n.
\end{array}\right.\] 
With respect to this tensor product, the category ${{\mathsf{PCS}}}$ is a monoidal category.

\subsection{Precubical intervals} 
\begin{sloppypar}
Let $k$ and $l$ be two integers such that $k \leq l$. The \emph{precubical interval}  $\llbracket k,l \rrbracket$ is the  precubical set defined by $\llbracket k,l \rrbracket_0 = \{k,\dots , l\}$, $\llbracket k,l \rrbracket_1 =  \{{[k,k+ 1]}, \dots , {[l- 1,l]}\}$, $d_1^0[j-1,j] = j-1$, $d_1^1[j-1,j] = j$, and $\llbracket k,l \rrbracket_{n} = \emptyset$ for $n > 1$.
\end{sloppypar}

\subsection{Precubical cubes}
	The \emph{precubical $n\mbox{-}$cube} is the $n$-fold tensor product ${\llbracket 0,1\rrbracket^ {\otimes n}}$. Here, we use the convention that ${\llbracket 0,1\rrbracket^ {\otimes 0}}$ is the precubical set $\llbracket 0, 0 \rrbracket = \{0\}$. The only element of degree $n$ in $\llbracket 0,1\rrbracket^ {\otimes n}$ will be denoted by $\iota_n$. We thus have $\iota_0 = 0$ and $\iota_n = (\underbracket[0.5pt]{ [0,1] ,\dots , [ 0,1]}_{n\; {\text{times}}})$ for $n>0$. Given an element $x$ of degree $n$ of a precubical set $P$, there exists a unique morphism of precubical sets $x_{\sharp}\colon \llbracket 0,1\rrbracket ^{\otimes n}\to P$ that sends $\iota_n$ to $x$. Indeed, 
	by the Yoneda lemma, there exist unique morphisms of precubical sets $f\colon \Box(-,[0,1]^n) \to P$ and $g\colon \Box(-,[0,1]^n) \to \llbracket 0,1\rrbracket ^{\otimes n}$ such that $f(id_{[0,1]^n}) = x$ and $g(id_{[0,1]^n}) = \iota_n$. The map $g$ is an isomorphism, and $x_{\sharp} = f\circ g^{-1}$.

\subsection{Truncated precubical sets}\label{truncation}
Let $n \geq 0$ be an integer. An \emph{$n$-truncated} precubical set is a precubical set $P$ such that $P_m = \emptyset$ for all $m > n$. The full subcategory of ${{\mathsf{PCS}}}$ given by the $n$-truncated precubical sets is denoted by ${{\mathsf{PCS}}_n}$. The \emph{$n\mbox{-}$skeleton} of a precubical set $P$ is the $n$-truncated precubical set $P_{\leq n}$ defined by $(P_{\leq n})_m = P_m$ $(m\leq n)$. The association $P \mapsto P_{\leq n}$ defines the 
\emph{$n$-truncation functor} $T_n \colon {{\mathsf{PCS}}} \to {{\mathsf{PCS}}_n}$. The functor $T_n$ has both a left and a right adjoint (cf. e.g. \cite[6.2.6]{Riehl}). The left adjoint is the inclusion ${{\mathsf{PCS}}_n} \hookrightarrow {{\mathsf{PCS}}}$. The right adjoint, which we will denote by $\varGamma_n$, sends an $n$-truncated precubical set $P$ to its \emph{$n$-coskeleton}, i.e., the unique (up to isomorphism)  maximal precubical set such that the $n$-skeleton is $P$ and no two different cubes of degree $>n$ have the same boundary. We remark that natural bijections $(\varGamma_nP)_m \to \mathsf{PCS}(\llbracket 0,1\rrbracket^{\otimes m}_{\leq n},P)$ are given by $x \mapsto (x_{\sharp})_{\leq n}$.

\subsection{Paths} \label{paths}
A \emph{path} of \emph{length $l$} $(l \geq 0)$ from a vertex $v$ of a precubical set $P$ to a vertex $w$ is a morphism of precubical sets $\omega \colon \llbracket 0,l \rrbracket \to P$ such that $\omega(0) = v$ and $\omega(l) = w$. The set of paths in $P$ is denoted by $P^{\mathbb I}$. The \emph{concatenation} $\omega \cdot\nu$  of two paths $\omega \colon \llbracket 0,k \rrbracket \to P$ and $\nu \colon \llbracket 0,l \rrbracket \to P$ with $\omega (k) = \nu (0)$ is defined in the obvious way. Note that every path in $P$ can be written as a finite concatenation of paths of the form $x_{\sharp}$ where $x \in P_{\leq 1}$.  

Given two precubical sets $P^1$ and $P^2$, there are unique maps  $\pi^i\colon (P^1\otimes P^2)^{\mathbb I} \to (P^i)^{\mathbb I}$  $(i \in \{1,2\})$ that are compatible with concatenation and satisfy $\pi^i((x^1,x^2)_{\sharp}) = x^i_{\sharp}$ for all $(x^1,x^2) \in (P^1\otimes P^2)_{\leq 1}$. If $\omega \in (P^1\otimes P^2)^{\mathbb I}$ is a path from $(v^1,v^2)$ to $(w^1,w^2)$, then $\pi^i(\omega)$ is a path in $P^i$ from $v^i$ to $w^i$. The map $(\pi^1,\pi^2) \colon (P^1\otimes P^2)^{\mathbb I} \to (P^1)^{\mathbb I} \times (P^2)^{\mathbb I}$ is not bijective in general but has a right inverse. If for $i \in \{1,2\}$, $\omega^i$ is a path in $P^i$ from $v^i$ to $w^i$, then this right inverse sends the pair $(\omega^1, \omega^2)$ to a path from $(v^1,v^2)$ to $(w^1,w^2)$.

\subsection{Higher-dimensional automata} \label{HDAdef}

A \emph{higher-di\-mensional automaton} (HDA) is a tuple \[\A = (P_{\A},I_{\A},F_{\A},\Sigma_{\A}, \lambda_{\A})\] where  $P_{\A}$ is a precubical set, ${I_{\A} \in (P_{\A})_0}$ is a vertex, called the \emph{initial state}, ${F_{\A} \subseteq (P_{\A})_0}$ is a possibly empty set of \emph{final} or \emph{accepting  states}, $\Sigma_{\A}$ is a set of \emph{labels}, and $\lambda_\A \colon (P_{\A})_1 \to \Sigma_{\A}$ is a map, called the \emph{labeling function}, such that  $\lambda_{\A} (d_i^0x) = \lambda_{\A} (d_i^1x)$ for all $x \in (P_{\A})_2$ and $i \in \{1,2\}$ \cite{vanGlabbeek}. A \emph{morphism} from an HDA $\A$ to an HDA $\B$ consists of a morphism of precubical sets  $f\colon P_\A \to P_\B$ and a map $\sigma \colon \Sigma_{\A} \to \Sigma_{\B}$ such that $f(I_{\A}) = I_{\B}$, $f(F_{\A}) \subseteq F_{\B}$, and  $\lambda_{\B}(f(x)) =  \sigma(\lambda_{\A}(x))$ for all $x \in (P_\A)_1$. The category of HDAs is denoted by $\sf HDA$.

Given two HDAs  $\A$ and $\B$, we say that $\B$ is a \emph{subautomaton} of $\A$ and write $\B \subseteq \A$ if $P_\B$ is a precubical subset of $P_\A$, $\Sigma_B \subseteq \Sigma_A$, $I_{\B} = I_{\A}$, $F_{\B}\subseteq F_{\A}$, and $\lambda_{\B}(x) = \lambda_{\A}(x)$ for all $x\in {(P_\B)_1}$.

An \emph{$n$-truncated} HDA is an HDA $\A$ such that $P_\A$ is an $n$-truncated precubical set. The \emph{$n$-skeleton} of an HDA $\A$ is the $n$-truncated HDA $\A_{\leq n}$ defined by $P_{\A_{\leq n}} = (P_{\A})_{\leq n}$, $I_{\A_{\leq n}} = I_{\A}$, $F_{\A_{\leq n}} = F_{\A}$, $\Sigma_{\A_{\leq n}} = \Sigma_{\A}$, and $\lambda_{\A_{\leq n}} = \lambda_{\A}|_{(P_{\A_{\leq n}})_1}$.

\section{\texorpdfstring{$\ltimes$}{}-Transition systems} \label{ltimes}

We introduce $\ltimes$-transition systems, which are transition systems with a relation on the set of labels. We make precise how shared-variable systems can be modeled by $\ltimes\mbox{-}$transition systems and relate $\ltimes\mbox{-}$transition systems to 2-truncated higher-dimensional automata.

\subsection{Deterministic and extensional HDAs} An HDA is said to be \emph{deterministic} if its $1$-skeleton is deterministic in the sense of automata theory, i.e., if any two edges with the same label and the same initial vertex are equal. The following weaker condition on HDAs is essential in the sequel (see Section \ref{secHDAinter}): we say that an HDA is \emph{extensional} if any two edges with the same label and the same initial and final vertices are equal. The full subcategory of $\mathsf{HDA}$ given by the extensional HDAs will be denoted by $\mathsf{HDA_{ext}}$.  

\subsection{Transition systems and \texorpdfstring{$\ltimes$}{}-transition systems} \label{transdef}
\begin{sloppypar}
A \emph{transition system} is a $1\mbox{-}$truncated extensional HDA. A \emph{$\ltimes$-transition system} is a tuple 
\[\T = (P_{\T},I_{\T},F_{\T},\Sigma_{\T},\lambda_{\T},\ltimes_\T)\]
where $U(\T) = (P_{\T},I_{\T},F_{\T},\Sigma_{\T},\lambda_{\T})$ is a transition system and $\ltimes_\T$ is a relation on $\Sigma_T$. The transition system $U(\T)$ is called the \emph{underlying transition system}. A \emph{morphism} from a $\ltimes\mbox{-}$transition system $\S$ to a $\ltimes$-transition system $\T$ is a morphism of HDAs $(f,\sigma) \colon U(\S) \to U(\T)$ such that $\sigma$ is compatible with the relations $\ltimes_\S$ and $\ltimes_\T$. The category of $\ltimes$-transition systems will be denoted by $\mathsf{TS_{\ltimes}}$. A $\ltimes\mbox{-}$transition system is called \emph{deterministic} if the underlying transition system is a deterministic HDA. 
\end{sloppypar}
\subsection{Asynchronous transition systems} One may view $\ltimes$-transition systems as generalized asynchronous transition systems in the sense of \cite{WinskelNielsen} (see also \cite{BednarczykThesis, ShieldsConcurrentMach}). These are transition systems with an irreflexive and symmetric relation that has to satisfy certain conditions and represents independence of actions. 

\subsection{Shared-variable systems given by program graphs} 

Let $Var$ be a set of \emph{variables}. The \emph{domain} of a variable $x$, i.e., the set of its possible values, will be denoted by $D_x$. A \emph{program graph} over $Var$ is a tuple \[(V, L, A, T, g, \imath
)\]
the components of which are described as follows (cf. \cite{BaierKatoen, MannaPnueli}):
\begin{itemize}
	\item $V \subseteq Var$ is a set of \emph{declared variables}.  
	\item $L$ is a set of control-flow  \emph{locations}.
	\item $A$ is a set of \emph{actions}. An action is a function $Ev(V) \to Ev(V)$ where  $Ev(W)$ $(W \subseteq Var)$ is the set of \emph{evaluations} of the variables in $W$, i.e.,  \[\quad \quad \quad Ev(W) = \{\eta\colon W \to \cup _{x\in Var} D_x \, |\, \forall \, x \in W : \eta(x) \in D_x\} \cong \prod \limits_{x \in W} D_x. \]
	\item $T\subseteq L \times A  \times L$ is a set of \emph{transitions}. A transition is thus a triple
	\[t = (s_{t}, a_{t},  e_{t})\]
	consisting of a start location $s_{t}$, an action $a_{t}$, and an end location $e_{t}$. 
	\item $g\colon T \to {\mathfrak P}(Ev(V))$ is a function that associates with every transition a \emph{guard condition} on the values of the variables, which is represented by the subset of $Ev(V)$ containing the evaluations for which the condition is true and therefore the transition is enabled. 
	\item $\imath \in L$ is the \emph{initial location} of the program graph.	
\end{itemize}
We say that a program graph is \emph{deterministic} if any two transitions with the same start location and the same action are equal. A \emph{shared-variable system} over $Var$ is a tuple $(\P_1, \dots, \P_n)$  of program graphs over $Var$.

\subsection{The \texorpdfstring{$\ltimes$}{}-transition system model of a shared-variable system}

Consider the shared-variable system $(\P_1, \dots, \P_n)$ over $Var$ given by the program graphs 
\[\P_i = (V_i, L_i, A_i, T_i, g_i, \imath_i), \quad i = 1, \dots, n,\]
and fix an \emph{initial evaluation} $\eta_0 \in Ev(Var)$. 
Define the set of \emph{declared variables} of the system by 
$V = \bigcup \limits_{i=1}^{n} V_i$. Note that $V$ will normally consist of shared variables and not be a disjoint union. 
The \emph{states} of the system  $(\P_1, \dots, \P_n)$ are the elements of the set 
\[Q_0 = L_1 \times \cdots \times L_n \times Ev(V).\] 
The \emph{initial state} of the system is the state
$I  = (\imath_1, \dots , \imath_n, \eta_0|_V)$. The set $Q_0$ is the set of vertices of a 1-truncated precubical set $Q$, which we call the \emph{state graph} of $(\P_1, \dots, \P_n)$. The set of edges of $Q$ is given by 
\[Q_1 = \bigcup_{\substack{i\in \{1,\dots,n\} \\t \in T_i}} L_1 \times \cdots \times  L_{i-1} \times \{t\} \times L_{i+1} \times \cdots \times L_n \times g_i({t}) \times Ev(V \setminus V_i).\]
Before we define the boundary operators of $Q$, let us note that for any two disjoint subsets $W, W' \subseteq V$, we have a bijection $Ev(W \cup W') \to Ev(W) \times Ev(W')$, $\eta \mapsto (\eta|_{W_1}, \eta|_{W_2})$. We denote the inverse bijection by $\ast$. The starting point of an edge \[x = (l_1,\dots, l_{i-1},t,l_{i+1}, \dots, l_{n},\gamma,\eta)\] is the state \[d^0_1x = (l_1,\dots, l_{i-1},s_{t},l_{i+1}, \dots, l_{n},\gamma \ast \eta),\] and its endpoint is the state \[d^1_1x = (l_1,\dots, l_{i-1},e_{t},l_{i+1}, \dots, l_{n},a_{t}(\gamma)\ast \eta).\] 

Let $\T$ be the $\ltimes$-transition system defined as follows:
\begin{itemize}
	\item $P_\T$ is the largest precubical subset of $Q$ the vertices of which are the \emph{reachable} states, i.e., the states to which there exists a path from $I$. 
	\item $I_\T = I$.
	\item $F_\T = \emptyset$.
	\item $\Sigma_\T = \bigcup \limits_{i= 1}^n \{i\}\times A_i$.
	\item  Given an edge $x = (l_1,\dots, l_{i-1},t,l_{i+1}, \dots, l_{n},\gamma,\eta) \in (P_\T)_1$, we say that $i$ is the \emph{process ID} of $x$ and set $\lambda_\T(x) = (i,a_t)$.  
	\item For labels $(i,a), (j,b) \in \Sigma_\T$, we set $(i,a) \ltimes_\T (j,b)$ if $i < j$. 
\end{itemize}
The $\ltimes$-transition system $\T$ is the \emph{$\ltimes$-transition system model} of  $(\P_1, \dots, \P_n)$ with respect to the initial evaluation $\eta_0$. It will be denoted by $\T_{\P_1, \dots, \P_n}$. 

\begin{prop} \label{asymindet}
The relation $\ltimes_{\T_{\P_1, \dots, \P_n}}$ is asymmetric. If each program graph $\P_i$ is deterministic, then so is the $\ltimes$-transition system $\T_{\P_1, \dots, \P_n}$.
\end{prop}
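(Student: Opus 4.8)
The plan is to treat the two assertions separately, the first being essentially immediate and the second reducing to the determinism of the individual program graphs. For asymmetry I would simply unfold the definition: by construction $(i,a)\ltimes_\T(j,b)$ holds exactly when $i<j$, and since the strict order $<$ on $\{1,\dots,n\}$ satisfies $i<j \Rightarrow \neg(j<i)$, it follows at once that $(i,a)\ltimes_\T(j,b)$ precludes $(j,b)\ltimes_\T(i,a)$. Hence $\ltimes_{\T_{\P_1,\dots,\P_n}}$ is asymmetric. This is a one-line verification.

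For the determinism claim, recall that $\T_{\P_1,\dots,\P_n}$ is deterministic precisely when its underlying $1$-truncated transition system is a deterministic HDA, i.e. when any two edges with the same label and the same initial vertex coincide. I would therefore fix edges $x,x'\in(P_\T)_1$ with $\lambda_\T(x)=\lambda_\T(x')$ and $d^0_1x=d^0_1x'$ and aim to show $x=x'$. Writing $\lambda_\T(x)=(i,a)$, the equality of labels forces both edges to have process ID $i$, so that their transitions sit in slot $i$, and to carry the same action $a$. Thus I may write
\[
x = (l_1,\dots,l_{i-1},t,l_{i+1},\dots,l_n,\gamma,\eta), \qquad
x' = (l_1',\dots,l_{i-1}',t',l_{i+1}',\dots,l_n',\gamma',\eta'),
\]
with $t,t'\in T_i$ and $a_t=a_{t'}=a$.

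The next step is to read off the consequences of $d^0_1x=d^0_1x'$. Comparing the starting states $(l_1,\dots,s_t,\dots,l_n,\gamma\ast\eta)$ and $(l_1',\dots,s_{t'},\dots,l_n',\gamma'\ast\eta')$ componentwise yields $l_k=l_k'$ for all $k\neq i$, the equality of start locations $s_t=s_{t'}$, and $\gamma\ast\eta=\gamma'\ast\eta'$ in $Ev(V)$. Since $\gamma$ and $\gamma'$ lie in $Ev(V_i)$ (the guard $g_i$ takes values in subsets of $Ev(V_i)$) while $\eta$ and $\eta'$ lie in the complementary set $Ev(V\setminus V_i)$, and since $\ast$ is the inverse of the bijection $Ev(V)\to Ev(V_i)\times Ev(V\setminus V_i)$, injectivity of $\ast$ forces $\gamma=\gamma'$ and $\eta=\eta'$. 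At this point every component of $x$ and $x'$ agrees except possibly the transitions themselves. Here the hypothesis enters: $t$ and $t'$ share the start location $s_t=s_{t'}$ and the action $a_t=a_{t'}$, so the determinism of $\P_i$ forces $t=t'$, whence $x=x'$.

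I do not expect a genuine obstacle here; the argument is a careful unwinding of definitions. The only point requiring attention is the bookkeeping around $\ast$ — recognizing that $\gamma$ and $\eta$ live in the complementary sets $Ev(V_i)$ and $Ev(V\setminus V_i)$, so that their $\ast$-combination both determines and is determined by the pair $(\gamma,\eta)$ — together with the observation that matching labels pin down both the process ID and the action. This last observation is exactly what reduces global determinism of $\T_{\P_1,\dots,\P_n}$ to determinism of the single program graph $\P_i$.
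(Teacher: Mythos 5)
Your proposal is correct and follows essentially the same route as the paper's proof: asymmetry is immediate from the definition via $i<j$, and determinism is reduced to that of $\P_i$ by noting that equal labels fix the process ID $i$ and action $a$, while $d^0_1x=d^0_1x'$ together with the injectivity of $\ast$ forces all remaining components except the transitions to agree, whence $t=t'$ by determinism of $\P_i$. The only difference is that you spell out the bookkeeping around $\ast$ in more detail than the paper does.
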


\begin{proof}
It is clear that the relation $\ltimes_{\T_{\P_1, \dots, \P_n}}$ is asymmetric. Suppose that the program graphs $\P_i$ are deterministic. Consider two edges \[x = (l_1,\dots, l_{i-1},t,l_{i+1}, \dots, l_{n},\gamma,\eta) \quad \mbox{and} \quad y = (l'_1,\dots, l'_{j-1},t',l'_{j+1}, \dots, l'_{n},\gamma',\eta')\] such that $d^0_1x = d^0_1y$ and $\lambda_{\T_{\P_1, \dots, \P_n}}(x) = \lambda_{\T_{\P_1, \dots, \P_n}}(y)$. Then $i = j$ and $a_t = a_{t'}$. Consequently, $\gamma = \gamma'$, $\eta = \eta'$, $s_t = s_{t'}$, and $l_r = l'_r$ for all $r$. Moreover, by our hypothesis, $t = t'$. Thus $x=y$.		
\end{proof}

\begin{exa} \label{specex}
	Let $x$ be an integer variable. Set $Var = \{x\}$, and consider the deterministic program graph $\P = (V,L,A,T,g,\imath)$ over $Var$ given by $V = \{x\}$, $L = \{0,1\}$, $A = \{x{\scriptstyle{++}}, x{\scriptstyle{--}}\}$, $T = \{(0,x{\scriptstyle{++}},1), (1,x{\scriptstyle{--}},0)\}$, $g(0,x{\scriptstyle{++}},1) = g(1,x{\scriptstyle{--}},0) = Ev(V) = \{\eta \colon \{x\} \to \Z\}$, $\imath = 0$. We identify $Ev(V) = \Z$ by means of the correspondence $\eta \mapsto \eta(x)$. The actions $x{\scriptstyle{++}}$ and $x{\scriptstyle{--}}$ respectively increment and decrement the value of $x$. Let $\eta_0 = 0$ be the initial evaluation. Consider first the one-process system $(\P)$. The $\ltimes$-transition system $\T_\P$ has two states, namely  $(0,0)$ and $(1,1)$ (recall that we consider reachable states only), and two transitions, one labeled  $(1,x{\scriptstyle{++}})$ from $(0,0)$ to $(1,1)$ and one labeled  $(1,x{\scriptstyle{--}})$ from $(1,1)$ to $(0,0)$. The set $\Sigma_{\T_\P}$ consists of these two labels, and one has $I_{\T_\P} = (0,0)$ and  $\ltimes_{\T_\P} = \emptyset$. Consider now the shared-variable system $(\P_1, \P_2)$ consisting of two copies of $\P$. The underlying transition system of $\T_{\P_1,\P_2}$ is depicted in Figure \ref{fig2}. 
	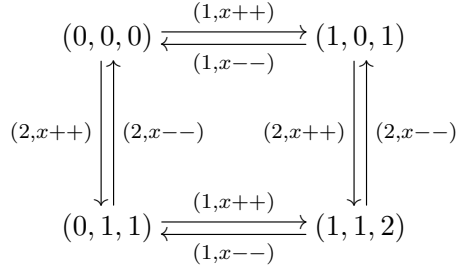
\begin{figure}  
		\centerline{ 
  		\xymatrix{
		(0,0,0) \ar@<.5ex>[rr]^{(1,x{\scriptstyle{++}})} \ar@<-.5ex>[dd]_{(2,x{\scriptstyle{++}})} 
		& & (1,0,1) \ar@<.5ex>[ll]^{(1,x{\scriptstyle{--}})} \ar@<-.5ex>[dd]_{(2,x{\scriptstyle{++}})}\\ 
		& & \\
		(0,1,1) \ar@<.5ex>[rr]^{(1,x{\scriptstyle{++}})} \ar@<-.5ex>[uu]_{(2,x{\scriptstyle{--}})} 
		& & (1,1,2) \ar@<.5ex>[ll]^{(1,x{\scriptstyle{--}})} \ar@<-.5ex>[uu]_{(2,x{\scriptstyle{--}})}
		} 
	} 
	\caption{Transition system of a simple shared-variable system}	\label{fig2}
	\end{figure}
	We have $I_{\T_{\P_1,\P_2}} = (0,0,0)$, $\Sigma_{\T_{\P_1,\P_2}} = \{(1,x{\scriptstyle{++}}),(1,x{\scriptstyle{--}}),(2,x{\scriptstyle{++}}),(2,x{\scriptstyle{--}})\}$, and 
	\begin{align*}
	\ltimes_{\T_{\P_1,\P_2}} &= \{((1,x{\scriptstyle{++}}),(2,x{\scriptstyle{++}})), ((1,x{\scriptstyle{++}}),(2,x{\scriptstyle{--}})),\\
	&\quad \;\;\;
	 ((1,x{\scriptstyle{--}}),(2,x{\scriptstyle{++}})), ((1,x{\scriptstyle{--}}),(2,x{\scriptstyle{--}}))
	\}.
	\end{align*}
\end{exa}

\subsection{Independence squares} An \emph{independence square} in a $\ltimes$-transition system $\T$ is a family of edges  $(x^k_i)_{k\in\{0,1\},i\in\{1,2\}}$ such that 
\begin{enumerate}
	\item for all $k, l \in \{0,1\}$, $d^k_1x^l_2 = d^l_1x^k_1$; and
	\item $\lambda_\T(x^0_2) = \lambda_\T(x^1_2) \ltimes_\T \lambda_\T(x^0_1) = \lambda_\T(x^1_1)$.  
\end{enumerate}
We remark that morphisms of $\ltimes$-transition systems preserve independence squares.

\subsection{\texorpdfstring{$\ltimes$}{}-Transition systems and 2-truncated extensional HDAs} \label{PsiPhi}
 
\begin{sloppypar}
Given a $\ltimes\mbox{-}$transition system $\T$, let $\varPsi(\T)$ be the 2-truncated extensional HDA defined by ${\varPsi (\T)_{\leq 1} = U(\T)}$, $(P_{\varPsi (\T)})_{2} = \{\mbox{independence squares in } \T\}$, and $d^l_j((x^k_i)_{k,i}) = x^l_j$. Given a morphism $(f,\sigma) \in \mathsf{TS_{\ltimes}}(\S,\T)$, consider the morphism of $2\mbox{-}$truncated precubical sets $\psi(f,\sigma) \colon P_{\varPsi (\S)} \to P_{\varPsi (\T)}$ given by ${\psi(f,\sigma)_{\leq 1} = f}$ and $\psi(f,\sigma)((x^k_i)_{k,i}) = (f(x^k_i))_{k,i}$ and define the morphism of 2$\mbox{-}$truncated extensional HDAs ${\varPsi(f,\sigma) \colon \varPsi(\S) \to \varPsi(\T)}$ by $\varPsi(f,\sigma) = (\psi(f,\sigma),\sigma)$. We obtain a functor from $\mathsf{TS}_{\ltimes}$ to the category ${\mathsf{HDA_{ext}}}_2$ of 2$\mbox{-}$truncated extensional HDAs.
\end{sloppypar}

Let $\A$ be a 2-truncated extensional HDA. Define the $\ltimes\mbox{-}$transition system $\varPhi(\A)$ by $U(\varPhi(\A)) = \A_{\leq 1}$ and 
\[
\alpha \ltimes_{\varPhi(\A)} \beta \; \Leftrightarrow \; \exists\, x \in (P_\A)_2 : \lambda_\A(d^0_2x) = \alpha,\,\lambda_\A(d^0_1x) = \beta.
\]
Consider a morphism $(f,\sigma) \in {\mathsf{HDA_{ext}}}_2(\A,\B)$. Then $\sigma$ is compatible with the relations of $\varPhi(\A)$ and $\varPhi(\B)$. Indeed, suppose that $\alpha \ltimes_{\varPhi(\A)} \beta$. Then there exists an element $x \in (P_\A)_2$ such that $\lambda_\A(d^0_2x) = \alpha$ and $\lambda_\A(d^0_1x) = \beta$. Hence \[\lambda_\B(d^0_2f(x)) = \lambda_\B(f(d^0_2x)) = \sigma(\lambda_\A(d^0_2x)) = \sigma(\alpha).\]
Similarly, $\lambda_\B(d^0_1f(x)) = \sigma(\beta)$. Thus $ \sigma(\alpha) \ltimes_{\varPhi(\B)} \sigma(\beta)$. Hence $(f_{\leq 1},\sigma) \in \mathsf{TS_{\ltimes}}(\varPhi(\A),\varPhi(\B))$. We obtain a functor \[\varPhi \colon {\mathsf{HDA_{ext}}}_2 \to \mathsf{TS_{\ltimes}},\; \A \mapsto \varPhi(\A), \; (f,\sigma) \mapsto (f_{\leq 1},\sigma).\]

\begin{prop}
	$\varPhi \dashv \varPsi$. 
\end{prop}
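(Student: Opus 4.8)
The plan is to prove the adjunction by producing, for every object $\A$ of ${\mathsf{HDA_{ext}}}_2$ and every $\ltimes$-transition system $\T$, a bijection
\[
\Theta_{\A,\T}\colon {\mathsf{HDA_{ext}}}_2(\A,\varPsi(\T)) \to \mathsf{TS_{\ltimes}}(\varPhi(\A),\T)
\]
and checking that it is natural in $\A$ and $\T$; the adjunction then follows from the standard characterisation of adjoint functors by a natural isomorphism of hom-functors. The map I would use is restriction to the $1$-skeleton: it sends a morphism $(f,\sigma)\colon \A \to \varPsi(\T)$ to $(f_{\leq 1},\sigma)$. Since $U(\varPhi(\A)) = \A_{\leq 1}$ and $\varPsi(\T)_{\leq 1} = U(\T)$, the defining conditions of an HDA morphism for $(f,\sigma)$ restrict verbatim to those making $(f_{\leq 1},\sigma)$ a morphism of transition systems $U(\varPhi(\A)) \to U(\T)$, so the only additional thing to verify on this side is that $\sigma$ is compatible with $\ltimes_{\varPhi(\A)}$ and $\ltimes_\T$. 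This holds because $f$ maps each $2$-cube of $\A$ to an independence square of $\T$, whose labels are related by $\ltimes_\T$ exactly as demanded by condition~(2) in the definition of an independence square.

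The heart of the argument is the construction of the inverse of $\Theta_{\A,\T}$, and here the key observation is that a $2$-cube of $\varPsi(\T)$ is \emph{literally} the family of its four boundary edges, since the face operator is $(x^k_i)_{k,i} \mapsto x^l_j$. Consequently any precubical morphism $f\colon P_\A \to P_{\varPsi(\T)}$ is entirely determined on degree $2$ by its restriction $f_{\leq 1}$: one necessarily has $f(x) = (f_{\leq 1}(d^k_i x))_{k,i}$ for $x \in (P_\A)_2$, which already gives injectivity of $\Theta_{\A,\T}$. To define the inverse, I would take a morphism $(g,\tau)\colon \varPhi(\A) \to \T$, let $f$ agree with $g$ on degrees $\leq 1$, and set $f(x) = (g(d^k_i x))_{k,i}$ on $2$-cubes. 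The verification that this family is a genuine $2$-cube of $\varPsi(\T)$, i.e.\ an independence square, splits cleanly: condition~(1) is automatic because $g$ is a precubical morphism and the precubical identity $d^k_1 d^l_2 = d^l_1 d^k_1$ holds in $P_\A$; the label equalities $\lambda_\T(g(d^0_i x)) = \lambda_\T(g(d^1_i x))$ hold because $\A$ is an HDA and $g$ is label-compatible through $\tau$; and the remaining relation $\lambda_\T(g(d^0_2 x)) \ltimes_\T \lambda_\T(g(d^0_1 x))$ holds because $\lambda_\A(d^0_2 x) \ltimes_{\varPhi(\A)} \lambda_\A(d^0_1 x)$ by the very definition of $\varPhi(\A)$ and $\tau$ respects the relations. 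One then checks that $f$ is a precubical morphism and that $(f,\tau)$ is an HDA morphism into $\varPsi(\T)$, and that the two assignments are mutually inverse.

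The main obstacle, such as it is, lies precisely in this equivalence between ``the boundary data of $x$ extends to a morphism into $\varPsi(\T)$'' and ``$\tau$ is compatible with the relations'': one must notice that the two label equalities in condition~(2) come for free from the HDA structure of $\A$, so that the $\ltimes$-clause of condition~(2) is the only real content and matches the relation-compatibility of $\tau$ exactly. Once this is in place, naturality of $\Theta_{\A,\T}$ is routine, since $\Theta_{\A,\T}$ acts by restriction on the precubical part and by the identity on the label map, and both operations commute with pre- and post-composition; I would simply chase the naturality squares for a morphism $\A' \to \A$ in ${\mathsf{HDA_{ext}}}_2$ and a morphism $\T \to \T'$ in $\mathsf{TS_{\ltimes}}$.
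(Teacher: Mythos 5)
Your argument is correct, and every step you need does go through: the restriction map lands in $\mathsf{TS_{\ltimes}}(\varPhi(\A),\T)$ because $f$ carries $2$-cubes of $\A$ to independence squares, it is injective because a $2$-cube of $\varPsi(\T)$ is by construction the tuple of its boundary edges, and your candidate inverse $f(x) = (g(d^k_ix))_{k,i}$ does produce an independence square, with condition (1) following from the identity $d^k_1d^l_2 = d^l_1d^k_1$ and condition (2) splitting, as you say, into label equalities that come from the HDA axiom on $\A$ and a $\ltimes_\T$-clause that is exactly the relation-compatibility of $\tau$. The paper proves the same adjunction via the other standard characterisation: it exhibits the counit $\varepsilon_\T = id_{U(\T)}\colon \varPhi(\varPsi(\T)) \to \T$ (whose well-definedness encodes the inclusion $\ltimes_{\varPhi(\varPsi(\T))} \subseteq \ltimes_\T$, the same fact you use to show the forward map is well defined) and the unit $\eta_\A = (\theta_\A, id_{\Sigma_\A})$ with $\theta_\A(x) = (d^k_ix)_{k,i}$ on $2$-cubes, which is precisely your inverse construction applied to the identity morphism; it then checks the two triangle identities. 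So the underlying verifications coincide almost verbatim, but the packaging differs: the unit--counit route keeps everything at the level of two concrete natural transformations and two equations, whereas your hom-set bijection makes the ``a $2$-cube of $\varPsi(\T)$ is its boundary data'' observation carry the whole weight of bijectivity and leaves you with a routine naturality chase at the end. Either is acceptable; the paper's version is slightly more economical because the counit is literally an identity, which trivialises one of the two triangle identities.
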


\begin{proof}
	Let $\T$ be a $\ltimes$-transition system. We have 
	\[U(\varPhi(\varPsi(\T))) = \varPsi(\T)_{\leq 1} = U(\T).\] 
	The identity of $U(\T)$ is a morphism of $\ltimes$-transition systems $\varPhi(\varPsi(\T)) \to \T$. Indeed, if $\alpha, \beta \in \Sigma_\T$ such that $\alpha \ltimes_{\varPhi(\varPsi(\T))} \beta$, then there exists an independence square $x = (x^k_i)_{k,i}$ in $\T$ such that $\lambda_\T(x^0_2) = \lambda_{\varPsi(\T)}(d^0_2x) = \alpha$ and $\lambda_{\T}(x^0_1) = \lambda_{\varPsi(\T)}(d^0_1x) = \beta$. Hence $\alpha \ltimes_\T \beta$. We define the counit of the adjunction to be the natural morphism of $\ltimes\mbox{-}$transition systems $\varepsilon_\T = id_{U(\T)} \colon \varPhi(\varPsi(\T)) \to \T$.

	Consider a 2-truncated extensional HDA $\A$. We have $\varPsi (\varPhi(\A))_{\leq 1} = U(\varPhi(\A)) = \A_{\leq 1}$. We define the unit $\eta_\A \colon \A \to \varPsi (\varPhi(\A))$ to be the natural morphism of $2\mbox{-}$truncated extensional HDAs $(\theta_\A, id_{\Sigma_\A})$ where $\theta_\A\colon P_\A \to P_{\varPsi (\varPhi(\A))}$ is given by 
	\[\theta_\A(x) = \left\{ \begin{array}{ll} x, & x \in (P_\A)_{\leq 1},\\
	(d^k_ix)_{k,i}, & x \in (P_\A)_2.
	\end{array}\right.\]

	\begin{sloppypar}	
	Let $\A$ be a 2-truncated extensional HDA. Since \[\varPhi(\eta_\A) = \varPhi(\theta_\A,id_{\Sigma_\A}) = ((\theta_\A)_{\leq 1},id_{\Sigma_\A}) = (id_{(P_\A)_{\leq 1}},id_{\Sigma_\A}) = id_{U(\varPhi(\A))},\] we have $\varepsilon_{\varPhi(\A)} \circ \varPhi(\eta_\A) = id_{\varPhi(\A)}$. Let $\T$ be a $\ltimes$-transition system. We have \[\varPsi(\varepsilon_\T) \circ \eta_{\varPsi(\T)} = (\psi(\varepsilon_\T),id_{\Sigma_\T}) \circ (\theta_{\varPsi(\T)},id_{\Sigma_{\varPsi(\T)}}) = (\psi(\varepsilon_\T)\circ \theta_{\varPsi(\T)}, id_{\Sigma_{\varPsi(\T)}}).\]
	Let $x = (x^k_i)_{k,i}$ be an independence square in $\T$. We have $\psi(\varepsilon_\T)\circ \theta_{\varPsi(\T)} (x) = \psi(\varepsilon_\T)((d^k_ix)_{k,i}) = \psi (id_{P_\T}, id_{\Sigma_\T})((x^k_i)_{k,i}) = (x^k_i)_{k,i} = x$. Since we also have ${\psi(id_{P_\T}, id_{\Sigma_\T}) \circ \theta_{\varPsi(\T)} (x)} = x$ for $x \in (P_{\varPsi(\T})_{\leq 1}$, it follows that $\varPsi(\varepsilon_\T) \circ \eta_{\varPsi(\T)} = id_{\varPsi(\T)}$. \qedhere 
	\end{sloppypar}
\end{proof}

\section{HDA models of \texorpdfstring{$\ltimes$}{}-transition systems} \label{SecHDAmod}

In this section we introduce HDA models of $\ltimes$-transition systems. We show that the HDA model of an $\ltimes$-transition system exists functorially and is unique up to isomorphism. We also establish that in the HDA model of a deterministic $\ltimes\mbox{-}$transition system with an asymmetric relation the independence of $n$ transitions is modeled by a unique $n$-cube.

\subsection{HDA models} \label{HDAmodels}

We say that an HDA $\A$ is an \emph{HDA model} of a $\ltimes$-transition system $\T$ if the following conditions hold:
	\begin{description}
		\item[HM1] $\A_{\leq 1} = U(\T)$.
		\item[HM2] For all $x \in (P_\A)_2$, $\lambda_\A(d^0_2x) \ltimes_\T \lambda_\A(d^0_1x)$.		
		\item[HM3] For all $m\geq 2$ and $x,y \in (P_\A)_m$, if $d^k_rx = d^k_ry$ for all $r \in\{1,\dots ,m\}$ and $k \in \{0,1\}$, then $x = y$.		
		\item[HM4] $\A$ is maximal with respect to the properties  HM1-HM3, i.e., $\A$ is not a proper subautomaton of any HDA satisfying HM1-HM3.
	\end{description}
Condition HM1 says that $\A$ is built on top of $\T$ by filling in empty cubes. By condition HM2, an empty square may only be filled in if it is an independence square. Condition HM3 ensures that no empty cube is filled in twice in the same way. By condition HM4, all admissible empty cubes are filled in. 

Condition HM3 has the following equivalent form, which says that a cube is determined by the family of its vertices and edges (like in a 1-coskeleton):

\begin{prop} \label{HM3equiv} 
	An HDA $\A$ satisfies condition HM3 if and only if for all $m\in \N$, the map \[(P_\A)_m \to \mathsf{PCS}(\llbracket 0,1\rrbracket^{\otimes m}_{\leq 1},(P_\A)_{\leq 1}), \quad x \mapsto (x_{\sharp})_{\leq 1}\] is injective.
\end{prop}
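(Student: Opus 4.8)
The plan is to prove both implications by relating the faces $d^k_i x$ of an $m$-cube $x$ to the restriction of its characteristic morphism $x_\sharp$ along the inclusion of a facet of $\llbracket 0,1\rrbracket^{\otimes m}$. The central identity I would establish first is that, for every $x \in (P_\A)_m$, every $k \in \{0,1\}$, and every $i \in \{1,\dots,m\}$,
\[
(d^k_i x)_\sharp = x_\sharp \circ (d^k_i\iota_m)_\sharp,
\]
where $(d^k_i\iota_m)_\sharp \colon \llbracket 0,1\rrbracket^{\otimes(m-1)} \to \llbracket 0,1\rrbracket^{\otimes m}$ sends $\iota_{m-1}$ to the facet $d^k_i\iota_m$. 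This follows at once from the uniqueness in the universal property of $\llbracket 0,1\rrbracket^{\otimes(m-1)}$, since the right-hand side sends $\iota_{m-1}$ to $x_\sharp(d^k_i\iota_m) = d^k_i(x_\sharp(\iota_m)) = d^k_i x$. Restricting to $1$-skeletons, the restriction of $(x_\sharp)_{\leq 1}$ to the facet $d^k_i\iota_m$ is exactly $((d^k_i x)_\sharp)_{\leq 1}$. I would also record at the outset that for $m \leq 1$ we have $\llbracket 0,1\rrbracket^{\otimes m}_{\leq 1} = \llbracket 0,1\rrbracket^{\otimes m}$, so $(x_\sharp)_{\leq 1} = x_\sharp$ recovers $x = x_\sharp(\iota_m)$ and the map is trivially injective; thus both sides of the equivalence really concern only $m \geq 2$.

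The one genuinely combinatorial ingredient, and the step I expect to be the main obstacle, is a covering lemma: for $m \geq 2$, every vertex and every edge of $\llbracket 0,1\rrbracket^{\otimes m}$ lies in the image of $(d^k_i\iota_m)_\sharp$ for some $k,i$. Thinking of the cells of $\llbracket 0,1\rrbracket^{\otimes m}$ as words of length $m$ over $\{0,1,[0,1]\}$, a vertex is a word over $\{0,1\}$ and an edge is such a word with exactly one letter replaced by $[0,1]$; in either case, since $m \geq 2$, at least one position $i$ carries a fixed value $k \in \{0,1\}$, and then the cell lies in the facet $d^k_i\iota_m$. Consequently the images of the maps $((d^k_i\iota_m)_\sharp)_{\leq 1}$ jointly exhaust $\llbracket 0,1\rrbracket^{\otimes m}_{\leq 1}$, so a morphism out of $\llbracket 0,1\rrbracket^{\otimes m}_{\leq 1}$ is determined by its restrictions to the facets.

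With these two facts in hand the equivalence is immediate. For the implication from injectivity to HM3, I would take $m \geq 2$ and $x,y \in (P_\A)_m$ with $d^k_r x = d^k_r y$ for all $k,r$; the displayed identity then shows that $(x_\sharp)_{\leq 1}$ and $(y_\sharp)_{\leq 1}$ agree on every facet, hence everywhere by the covering lemma, and injectivity at degree $m$ yields $x = y$. For the converse I would argue by induction on $m$, the cases $m \leq 1$ being the trivial ones noted above. Assuming the statement for $m-1$ and given $x,y \in (P_\A)_m$ with $(x_\sharp)_{\leq 1} = (y_\sharp)_{\leq 1}$, the facet-restriction identity gives $((d^k_i x)_\sharp)_{\leq 1} = ((d^k_i y)_\sharp)_{\leq 1}$ for all $k,i$, whence $d^k_i x = d^k_i y$ by the inductive hypothesis; condition HM3 then forces $x = y$. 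I expect the only real care to be needed in the bookkeeping of the facet inclusions and the verification of the covering lemma; everything else is a formal consequence of the universal property of the precubical cube.
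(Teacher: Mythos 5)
Your proposal is correct and follows essentially the same route as the paper: the key identity $(d^k_i x)_\sharp = x_\sharp \circ (d^k_i\iota_m)_\sharp$, an induction on $m$ for the direction HM3 $\Rightarrow$ injectivity, and for the converse the observation that the $1$-skeleton of $\llbracket 0,1\rrbracket^{\otimes m}$ is covered by the facets (which the paper phrases as ``all iterated boundaries of $x$ and $y$ coincide''). Your explicit covering lemma is just a slightly more detailed justification of that last step.
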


\begin{proof}
	Suppose first that $\A$ satisfies HM3. We show by induction that the map of the statement is injective for all $m \in \N$. If $m \leq 1$, it is actually bijective. Suppose that $m\geq 2$ and that the assertion holds for $m-1$. Let $x,y \in (P_\A)_m$ be elements such that $(x_{\sharp})_{\leq 1}  = (y_{\sharp})_{\leq 1}$. Let $r \in\{1,\dots ,m\}$ and $k \in \{0,1\}$. We have $(d^k_rx)_{\sharp}(\iota_{m-1}) = d^k_rx = d^k_rx_{\sharp}(\iota_m) = x_{\sharp}(d^k_r\iota_m) = x_{\sharp}((d^k_r\iota_m)_{\sharp}(\iota_{m-1})) = {x_{\sharp} \circ (d^k_r\iota_m)_{\sharp}(\iota_{m-1})}$ and therefore $(d^k_rx)_{\sharp} = x_{\sharp} \circ (d^k_r\iota_m)_{\sharp}$. For the same reason, $(d^k_ry)_{\sharp} = {y_{\sharp} \circ (d^k_r\iota_m)_{\sharp}}$. For $w \in \llbracket 0,1 \rrbracket ^{\otimes {m-1}}_{\leq 1}$, $(d^k_r\iota_m)_{\sharp}(w) \in \llbracket 0,1 \rrbracket ^{\otimes {m}}_{\leq 1}$ and hence
	\[(d^k_rx)_{\sharp}(w) = x_{\sharp}((d^k_r\iota_m)_{\sharp}(w))= y_{\sharp} ((d^k_r\iota_m)_{\sharp}(w)) = (d^k_ry)_{\sharp}(w).\]
	Thus $((d^k_rx)_{\sharp})_{\leq 1} = ((d^k_ry)_{\sharp})_{\leq 1}$. By the inductive hypothesis, it follows that $d^k_rx = d^k_ry$. Hence, by HM3, $x = y$.
	
	Suppose now that the map of the statement is injective for all $m \in \N$. Let $m\geq 2$, and consider elements $x,y \in (P_\A)_m$ such that $d^k_rx = d^k_ry$ for all $r \in\{1,\dots ,m\}$ and $k \in \{0,1\}$. Then all iterated boundaries of $x$ and $y$ coincide. Therefore $x_{\sharp}$ and $y_{\sharp}$ agree on all iterated boundaries of $\iota_m$. In particular, $(x_{\sharp})_{\leq 1} = (y_{\sharp})_{\leq 1}$. Hence $x = y$.
\end{proof}

\subsection{The functor \texorpdfstring{$\mathscr{A}$}{}}
Recall the adjunctions $T_2 \dashv \varGamma_2$ and $\varPhi \dashv \varPsi$ from Sections \ref{truncation} and \ref{PsiPhi}. The adjunction $T_2 \dashv \varGamma_2$ induces an adjunction \[T_2\colon \mathsf{HDA_{ext}}\to {\mathsf{HDA_{ext}}}_2 \dashv \varGamma_2 \colon {\mathsf{HDA_{ext}}}_2\to \mathsf{HDA_{ext}}.\]  We define the functor \[\mathscr{A} \colon \mathsf{TS}_{\ltimes} \to \mathsf{HDA_{ext}}\] to be the composite $\varGamma_2\circ \varPsi$. Then $\mathscr{A}$ is right adjoint to the composite $\varPhi\circ T_2$.

\begin{thm}
	Let $\T$ ba a $\ltimes$-transition system. Then $\mathscr{A}(\T)$ is an HDA model of $\T$. 
\end{thm}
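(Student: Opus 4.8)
The plan is to verify conditions HM1--HM4 directly from the definition $\mathscr{A}(\T) = \varGamma_2\varPsi(\T)$, using throughout that $\varGamma_2$ is the $2$-coskeleton functor, so that the $2$-skeleton of $\mathscr{A}(\T)$ is exactly $\varPsi(\T)$. Conditions HM1 and HM2 amount to unwinding definitions. Since passing to the coskeleton preserves the $2$-skeleton, $\mathscr{A}(\T)_{\leq 1} = \varPsi(\T)_{\leq 1} = U(\T)$, which is HM1. For HM2, the $2$-cubes of $\mathscr{A}(\T)$ are the $2$-cubes of $\varPsi(\T)$, i.e.\ the independence squares of $\T$; for such a square $x = (x^k_i)_{k,i}$ one has $d^0_2x = x^0_2$ and $d^0_1x = x^0_1$, and condition (2) in the definition of independence square gives $\lambda_\T(d^0_2x)\ltimes_\T\lambda_\T(d^0_1x)$, which is precisely HM2.

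Condition HM3 splits according to degree. For $m=2$ it holds because an independence square is by definition the family of its own boundary edges, so two $2$-cubes of $\mathscr{A}(\T)$ with the same boundary coincide. For $m>2$ it is exactly the defining property of the coskeleton $\varGamma_2$: no two distinct cubes of degree $>2$ have the same boundary. Thus HM3 is immediate from the construction.

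The main work is HM4, and I would establish it by taking an HDA $\A$ satisfying HM1--HM3 with $\mathscr{A}(\T)\subseteq\A$ and proving $\A = \mathscr{A}(\T)$ degree by degree. At degree $2$, I first show that every $x\in(P_\A)_2$ is an independence square of $\T$: condition (1) is forced by the precubical identity $d^k_1d^l_2 = d^l_1d^k_1$, and condition (2) follows from the HDA labelling axiom $\lambda(d^0_ix)=\lambda(d^1_ix)$ together with HM2, all edges and their labels coming from $U(\T)$ by HM1. Hence $x$ and the corresponding $2$-cube of $\mathscr{A}(\T)$ (namely the independence square $(d^k_ix)_{k,i}$) share the same boundary, so HM3 applied to $\A$ forces them equal; this gives $(P_\A)_2 = (P_{\mathscr{A}(\T)})_2$ and therefore $(P_\A)_{\leq 2} = \varPsi(\T)$. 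For $m>2$, given $x\in(P_\A)_m$, I would restrict $x_\sharp$ to the $2$-skeleton: since $(P_\A)_{\leq 2} = \varPsi(\T)$, the map $(x_\sharp)_{\leq 2}$ is a morphism $\llbracket 0,1\rrbracket^{\otimes m}_{\leq 2}\to\varPsi(\T)$, which under the coskeleton bijection $(\varGamma_2\varPsi(\T))_m\to \mathsf{PCS}(\llbracket 0,1\rrbracket^{\otimes m}_{\leq 2},\varPsi(\T))$ corresponds to a cube $z\in(P_{\mathscr{A}(\T)})_m\subseteq(P_\A)_m$. As $z_\sharp$ computed in $\A$ agrees with the one computed in $\mathscr{A}(\T)$, one gets $(z_\sharp)_{\leq 2} = (x_\sharp)_{\leq 2}$, hence $(z_\sharp)_{\leq 1} = (x_\sharp)_{\leq 1}$, and Proposition~\ref{HM3equiv} (the injectivity form of HM3 for $\A$) yields $x=z$. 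Thus $(P_\A)_m = (P_{\mathscr{A}(\T)})_m$ for every $m$, so $\A = \mathscr{A}(\T)$ and HM4 holds.

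I expect HM4 to be the main obstacle, as it is the only place where one must genuinely combine the coskeleton description of $\varGamma_2$ with the injectivity reformulation of HM3 in Proposition~\ref{HM3equiv}; the other three conditions are essentially bookkeeping. The subtlety to treat with care is that boundaries, labels, and the induced morphisms $x_\sharp$ may be computed either in $\mathscr{A}(\T)$ or in the larger $\A$ without change, which is exactly what the subautomaton hypothesis $\mathscr{A}(\T)\subseteq\A$ guarantees and what makes the degree-by-degree comparison legitimate.
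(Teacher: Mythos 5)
Your proposal is correct and follows essentially the same route as the paper: HM1 and HM2 by unwinding $\varGamma_2\varPsi$, HM3 via the coskeleton property (the paper phrases the $m>2$ case through $(x_\sharp)_{\leq 2}=(y_\sharp)_{\leq 2}$, which is the same content), and HM4 by showing $(P_\A)_2$ consists of independence squares and then, for $m>2$, matching $(x_\sharp)_{\leq 2}$ with a coskeleton cube $z$ and invoking Proposition~\ref{HM3equiv} to conclude $x=z$. The only cosmetic difference is that the paper argues HM4 by a minimal counterexample rather than your degree-by-degree induction.
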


\begin{proof} (HM1) We have $\mathscr{A}(\T)_{\leq 1} = \varGamma_2(\varPsi (\T))_{\leq 1} = \varPsi (\T)_{\leq 1} = U(\T)$.

(HM2) Let $x\in (P_{\mathscr{A}(\T)})_2 = (P_{\varPsi (\T)})_2$. Then $x$ is an independence square in $\T$, $x = (x^k_i)_{k,i}$. Hence $\lambda_{\mathscr{A}(\T)}(d^0_2x) = \lambda_{\T}(x^0_2) \ltimes_\T \lambda_{\T}(x^0_1) = \lambda_{\mathscr{A}(\T)}(d^0_1x)$.

(HM3) Let $m\geq 2$ and $x,y \in (P_{\mathscr{A}(\T)})_m$ such that $d^k_rx = d^k_ry$ for all $r \in\{1,\dots ,m\}$ and $k \in \{0,1\}$. If $m=2$, then $x$ and $y$ are independence squares in $\T$, $x = (x^k_i)_{k,i}$, $y = (y^k_i)_{k,i}$, and we have $x^k_i = d^k_ix = d^k_iy = y^k_i$ for all $k$ and $i$, i.e., $x = y$. Suppose that $m > 2$. In order to establish that $x=y$ in $ P_{\mathscr{A}(\T)} = P_{\varGamma_2(\varPsi (\T))} = \varGamma_2(P_{\varPsi (\T)})$, it suffices to show that $(x_{\sharp})_{\leq 2} = (y_{\sharp})_{\leq 2} \in \mathsf{PCS}(\llbracket 0,1 \rrbracket ^{\otimes m}_{\leq 2},P_{\varPsi (\T)})$. This follows from the fact that all iterated boundaries of $x$ and $y$ coincide.

(HM4) Let $\A$ be an HDA satisfying conditions HM1-HM3 such that $\mathscr{A}(\T) \subseteq \A$. Suppose that $\mathscr{A}(\T) \neq \A$. Let $m$ be the minimal degree such that $(P_\A)_m \not \subseteq (P_{\mathscr{A}(\T)})_m$, and consider an element $x\in (P_\A)_m\setminus (P_{\mathscr{A}(\T)})_m$. Then $m \geq 2$ and for all $k$ and $i$, $d^k_ix \in (P_{\mathscr{A}(\T)})_{m-1}$. If $m=2$, then $y = (d^k_ix)_{k,i}$ is an independence square in $\T$ and $y \in (P_{\varPsi(\T)})_2 = (P_{\mathscr{A}(\T)})_2 \subseteq (P_\A)_2$. Since $d^k_iy = d^k_ix$ for all $k$ and $i$, by HM3, $x = y$. Therefore $x \in (P_{\mathscr{A}(\T)})_2$, which is not the case. Hence $m > 2$. Let $z$ be an element of $(P_{\mathscr{A}(\T)})_m =  (\varGamma_2(P_{\varPsi (\T)}))_m$ such that 
\[(z_{\sharp})_{\leq 2} = (x_{\sharp})_{\leq 2} \colon \llbracket 0,1\rrbracket ^{\otimes m}_{\leq 2} \to (P_\A)_{\leq 2} = (P_{\mathscr{A}(\T)})_{\leq 2}  = P_{\varPsi (\T)}.\]
By Proposition \ref{HM3equiv}, we have $z = x$, which is impossible because $z \in  (P_{\mathscr{A}(\T)})_m$. Thus $\mathscr{A}(\T) = \A$.
\end{proof}

\subsection{Uniqueness of HDA models} 

Let $\T$ be a $\ltimes$-transition system. We show that any two HDA models of $\T$ are isomorphic. 

\begin{prop} \label{filling}
	Let $\A$ be an HDA model of $\T$. Consider an integer $m \geq 2$ and $2m$ (not necessarily distinct) elements $x^k_i$ $(k \in \{0,1\}, i \in \{1, \dots ,m\})$ of degree $m-1$ such that $d^k_ix^l_j = d^l_{j-1}x^k_i$ for all $1\leq i<j \leq m$ and $k,l \in\{0,1\}$. If $m=2$, suppose also that $\lambda_\A(x^0_i) = \lambda_\A(x^1_i)$ for all $i \in \{1,2\}$ and that $\lambda_\A(x^0_2) \ltimes_\T \lambda_\A(x^0_1)$. Then there exists a unique element $x$ of degree $m$ such that $d^k_ix = x^k_i$ for all $k \in \{0,1\}$ and $i \in \{1, \dots ,m\}$.
\end{prop}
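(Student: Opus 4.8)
The plan is to separate uniqueness from existence. Uniqueness is immediate: if $x$ and $y$ are both $m$-cubes with $d^k_ix = x^k_i = d^k_iy$ for all $k$ and $i$, then, since $m \geq 2$, condition HM3 forces $x = y$. (Equivalently, one may invoke Proposition \ref{HM3equiv}.) So the entire content of the statement is the existence of at least one filler $x$.

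For existence I would argue by contradiction, using the maximality condition HM4. Suppose that $\A$ contains no $m$-cube $x$ with $d^k_ix = x^k_i$. I would then enlarge $\A$ to an HDA $\A'$ by freely adjoining a single new $m$-cube $x$ whose faces are \emph{declared} to be $d^k_ix = x^k_i$, while leaving all other data unchanged (lower- and higher-dimensional cubes, initial and final states, label set, and labeling function). The first thing to verify is that $\A'$ is a genuine precubical set, i.e., that the new faces obey the precubical identities $d^k_i\circ d^l_j = d^l_{j-1}\circ d^k_i$ for $i<j$. Evaluated on $x$ these read $d^k_ix^l_j = d^l_{j-1}x^k_i$, which is exactly the compatibility hypothesis on the family $(x^k_i)$. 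Since the only cube adjoined is $x$ itself (all iterated faces of $x$ are iterated faces of the $x^k_i$ and hence already lie in $\A$), it follows that $P_\A$ is a precubical subset of $P_{\A'}$ and that $\A'$ is an HDA with $\A \subseteq \A'$ and $\A \neq \A'$.

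Next I would check that $\A'$ still satisfies HM1--HM3, which contradicts the maximality of $\A$. Condition HM1 is untouched because $m \geq 2$, so the $1$-skeleton, the initial and final states, and the labeling are all unchanged. For the HDA labeling axiom and for HM2 I would split into cases. If $m > 2$, no new $2$-cube is created, so both the axiom $\lambda_{\A'}(d^0_iz) = \lambda_{\A'}(d^1_iz)$ and condition HM2 are inherited from $\A$ verbatim. If $m = 2$, the single new $2$-cube $x$ has $d^0_ix = x^0_i$ and $d^1_ix = x^1_i$; the labeling axiom is then precisely the assumed $\lambda_\A(x^0_i) = \lambda_\A(x^1_i)$, and HM2, namely $\lambda_{\A'}(d^0_2x) \ltimes_\T \lambda_{\A'}(d^0_1x)$, is precisely the assumed $\lambda_\A(x^0_2) \ltimes_\T \lambda_\A(x^0_1)$. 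Finally, HM3 holds for $\A'$: the only cube of $\A'$ not already in $\A$ is $x$, and by the standing assumption $\A$ contains no $m$-cube with full boundary $(x^k_i)$, so $x$ shares its boundary with no other cube; all remaining instances of HM3 are inherited from $\A$. Thus $\A'$ satisfies HM1--HM3 while properly containing $\A$, contradicting HM4, and the desired $m$-cube must already exist in $\A$.

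The only genuinely delicate point is the bookkeeping in the construction of $\A'$: one must confirm that assigning faces to a single fresh $m$-cube really yields a well-defined precubical set (this is where the compatibility hypothesis $d^k_ix^l_j = d^l_{j-1}x^k_i$ is used in full strength) and that no unintended lower-dimensional cubes, and in particular no new $2$-cubes when $m > 2$, are introduced that could spoil HM2 or the HDA labeling axiom. Once this is set up correctly, the verification of HM1--HM3 is routine, and the case split at $m = 2$ versus $m > 2$ is exactly dictated by the fact that the labeling hypotheses in the statement are imposed only when $m = 2$.
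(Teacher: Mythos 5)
Your proposal is correct and follows essentially the same route as the paper: uniqueness from HM3, and existence by adjoining a single fresh $m$-cube with the prescribed faces and deriving a contradiction with the maximality condition HM4. The paper's own proof is just a terser version of this, leaving the verification of the precubical identities and of HM1--HM3 (which you spell out) to the reader.
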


\begin{proof}
	Uniqueness follows from condition HM3. Suppose that there is no element $x \in (P_\A)_m$ such that $d^k_ix = x^k_i$  for all $k$ and $i$. Consider the HDA $\B$ given by $P_\B = P_\A \cup \{x\}$ where $x \notin P_\A$, $\deg(x) = m$, and $d^k_ix = x^k_i$, $I_\B = I_\A$, $F_\B = F_\A$, $\Sigma_\B = \Sigma_\A$, and $\lambda_\B = \lambda_\A$. Then $\B$ satisfies the conditions HM1-HM3. We obtain a contradiction because $\A$ is both maximal and a proper subautomaton of $\B$.
\end{proof}

\begin{prop} \label{exmor}
	Let $\A$ be an HDA satisfying conditions HM1 and HM2 with respect to a $\ltimes$-transition system $\S$, and let $\B$ be an HDA model of $\T$. Then for every morphism of $\ltimes$-transition systems $\S \to \T$, there exists a unique  morphism of HDAs $\A \to \B$ that restricts to $U(\S) \to U(\T)$ in degrees $\leq 1$. 
\end{prop}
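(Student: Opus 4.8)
The plan is to construct the morphism $(g,\sigma)\colon \A \to \B$ by induction on the degree of cubes, starting from the prescribed map $(f,\sigma)\colon U(\S)\to U(\T)$ in degrees $\le 1$, and then to verify uniqueness. Write the given $\ltimes$-transition system morphism as $(f,\sigma)\colon \S\to\T$. In degrees $0$ and $1$ we are forced to set $g = f$, so there is nothing to choose; this also fixes the label map $\sigma$. The content of the statement is that $f$ extends uniquely to all higher cubes of $\A$, and that the extension lands in $\B$.

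First I would treat degree $2$. Let $x\in (P_\A)_2$. Since $\A$ satisfies HM1 and HM2 with respect to $\S$, we have $\lambda_\S(d^0_2x)\ltimes_\S\lambda_\S(d^0_1x)$, and the edges $x^k_i := g(d^k_ix) = f(d^k_ix)$ of $\B$ are already defined by the degree-$1$ part. Because $\sigma$ is compatible with the relations $\ltimes_\S$ and $\ltimes_\T$, and because $f$ respects the labeling (so opposite edges of $x$ keep equal labels, $\lambda_\B(x^0_i)=\lambda_\B(x^1_i)$), the family $(x^k_i)_{k,i}$ satisfies exactly the hypotheses of Proposition \ref{filling} for $\B$ in degree $m=2$: the compatibility conditions $d^k_1x^l_2 = d^l_1x^k_1$ come from applying $f$ to the corresponding identities among the $d^k_ix$, and the label condition $\lambda_\B(x^0_2)\ltimes_\T\lambda_\B(x^0_1)$ follows by applying $\sigma$. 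Hence there is a unique $2$-cube $g(x)\in (P_\B)_2$ with $d^k_i g(x) = x^k_i$, and I define $g$ on $x$ to be this cube.

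For the inductive step, suppose $g$ has been defined on $(P_\A)_{<m}$ for some $m\ge 3$ so that it commutes with the face operators in the degrees already handled. Given $x\in (P_\A)_m$, set $x^k_i := g(d^k_ix)\in (P_\B)_{m-1}$. The precubical identities $d^k_id^l_jx = d^l_{j-1}d^k_ix$ for $i<j$, together with the fact that $g$ already commutes with faces below degree $m$, give $d^k_i x^l_j = d^l_{j-1} x^k_i$. Since $m\ge 3$, Proposition \ref{filling} applies with no label hypothesis and yields a unique $g(x)\in (P_\B)_m$ with $d^k_i g(x) = x^k_i = g(d^k_i x)$; this is precisely the statement that $g$ commutes with faces in degree $m$, so the induction continues. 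This construction simultaneously forces uniqueness: any morphism $A\to B$ restricting to $f$ in low degrees must, by the same face-compatibility requirement, satisfy $d^k_i g(x)=g(d^k_ix)$, and the uniqueness clause of Proposition \ref{filling} then pins down $g(x)$ in each degree, so the extension is unique. It remains to check that the resulting $(g,\sigma)$ is a genuine morphism of HDAs, i.e. that it preserves the initial state, sends final states into final states, and is label-compatible on edges; all of these hold already at the level of $(f,\sigma)$ and are inherited, since the higher-degree data impose no further such constraints.

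I expect the main obstacle to be bookkeeping rather than conceptual: one must confirm that the face-compatibility hypotheses of Proposition \ref{filling} are genuinely available at each stage, which requires that $g$ has been built to commute with \emph{all} lower faces before invoking the proposition in degree $m$. The delicate point is the base case $m=2$, where the extra label and relation-compatibility hypotheses of Proposition \ref{filling} must be produced from HM2 for $\S$ and the $\ltimes$-compatibility of $\sigma$; in higher degrees these hypotheses are vacuous, so the induction is routine once the square case is in place.
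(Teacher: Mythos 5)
Your proposal is correct and follows essentially the same route as the paper: induction on degree, producing the faces $f(d^k_ix)$ of the would-be image cube, verifying the precubical compatibility (and, for $m=2$, the label-equality and $\ltimes_\T$ conditions via HM2 and the compatibility of $\sigma$ with the relations), and then invoking the existence-and-uniqueness statement of Proposition~\ref{filling}. The uniqueness argument and the observation that the HDA-morphism conditions are settled in degrees $\le 1$ also match the paper's treatment.
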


\begin{proof}
	Let $(f,\sigma)\colon \S \to \T$ be a morphism of $\ltimes$-transitions systems. We show inductively that $f$ extends uniquely to degrees $\geq 2$. Let $m \geq 2$, and suppose that $f$ extends uniquely up to degree $m-1$. Let $x \in (P_\A)_m$. Consider the elements $y^k_i = f(d^k_ix)$ $(k \in \{0,1\}, \, i \in \{1, \dots,m\})$. Then for $1 \leq i < j \leq m$ and $k, l \in \{0,1\}$, $d^k_iy^l_j = d^k_if(d^l_jx) = f(d^k_id^l_jx) = f(d^l_{j-1}d^k_ix) = d^l_{j-1}f(d^k_ix) = d^l_{j-1}y^k_i$. If $m=2$, we also have for $i \in \{1,2\}$, $\lambda_\B (y^0_i) = \lambda_\T(f(d^0_ix)) =  \sigma (\lambda_\S(d^0_ix)) =  \sigma (\lambda_\A(d^0_ix)) =  \sigma (\lambda_\A(d^1_ix)) =  \sigma (\lambda_\S(d^1_ix)) = \lambda_\T(f(d^1_ix))  = \lambda_\B (y^1_i)$. Moreover, if $m=2$,  $\lambda_\S(d^0_2x) \ltimes_\S \lambda_\S(d^0_1x)$ and therefore \[\lambda_\B(y^0_2) = \lambda_\T(f(d^0_2x)) = \sigma(\lambda_\S(d^0_2x)) \ltimes_\T \sigma(\lambda_\S(d^0_1x)) = \lambda_\T(f(d^0_1x)) = \lambda_\B(y^0_1).\] By Proposition \ref{filling}, there exists a unique element $y \in (P_\B)_m$ such that $d^k_iy = y^k_i$ for all $k \in \{0,1\}$ and $i \in \{1, \dots ,m\}$. Set $f(x) = y$. We obtain that $f$ extends uniquely to degree $m$. 
\end{proof}

\begin{cor} \label{isomodel}
	Given two HDA models $\A$ and $\B$ of $\T$, there exists a unique morphism of HDAs $\A \to \B$ that restricts to the identity on $U(\T)$, and this morphism is an isomorphism.
\end{cor}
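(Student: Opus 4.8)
The plan is to obtain both the morphism and its inverse directly from Proposition \ref{exmor}, exploiting the fact that an HDA model, by definition, satisfies conditions HM1 and HM2.

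First I would note that since $\A$ is an HDA model of $\T$, it in particular satisfies HM1 and HM2 with respect to $\T$, while $\B$ is an HDA model of $\T$. Applying Proposition \ref{exmor} to the identity morphism $id_\T \colon \T \to \T$ therefore yields a unique morphism of HDAs $f \colon \A \to \B$ restricting to $id_{U(\T)}$ in degrees $\leq 1$. Since a morphism restricting to the identity on $U(\T)$ is exactly a morphism restricting to $id_{U(\T)}$ in degrees $\leq 1$, this $f$ is the asserted morphism, and the uniqueness clause of Proposition \ref{exmor} gives its uniqueness.

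Next, by symmetry — $\B$ likewise satisfies HM1 and HM2, and $\A$ is an HDA model — Proposition \ref{exmor} applied again to $id_\T$ produces a unique morphism $g \colon \B \to \A$ restricting to the identity on $U(\T)$. To see that $g$ is inverse to $f$, I would consider the composite $g \circ f \colon \A \to \A$. It is a morphism of HDAs restricting to $id_{U(\T)}$ in degrees $\leq 1$, and so is $id_\A$. Applying the uniqueness part of Proposition \ref{exmor} with $\S = \T$ and with both automata taken to be $\A$ (which, being an HDA model, satisfies HM1 and HM2) forces $g \circ f = id_\A$. The same argument with the roles of $\A$ and $\B$ exchanged gives $f \circ g = id_\B$. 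Hence $f$ is an isomorphism.

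The argument is essentially formal once Proposition \ref{exmor} is available, so I expect no serious obstacle. The only points requiring care are the bookkeeping that an HDA model satisfies the hypotheses HM1 and HM2 needed to invoke Proposition \ref{exmor}, and the observation that both $g \circ f$ and $id_\A$ fall under one and the same uniqueness statement, so that they are forced to coincide.
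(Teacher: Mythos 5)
Your argument is correct and is exactly the intended deduction: the paper states this as a corollary of Proposition \ref{exmor} without writing out a proof, and the implicit proof is precisely your application of that proposition to $id_\T$ in both directions followed by the uniqueness argument showing $g\circ f = id_\A$ and $f\circ g = id_\B$. No issues.
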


\subsection{Cubes and independence}

Cubes of dimension $\geq 2$ in an HDA represent independence of actions. It follows from Theorem \ref{paradigm} below that in an HDA model of a deterministic $\ltimes$-transition system $\T$ such that $\ltimes_\T$ is asymmetric the independence of $n$ actions at a state is represented by a unique $n$-cube. 

\begin{lem} \label{paralleledges} 
	Let $\A$ be an HDA, and let $x$ be a cube  of degree $m \geq 2$ of $\A$. Consider an edge $d^{k_1}_{1} \cdots d^{k_{j-1}}_{j-1}d^{k_{j+1}}_{j+1}\cdots d^{k_m}_mx$, and let $l_1, \dots, l_{j-1}, l_{j+1} , \dots, l_m \in \{0,1\}$. Then \[\lambda_\A(d^{k_1}_{1} \cdots d^{k_{j-1}}_{j-1}d^{k_{j+1}}_{j+1}\cdots d^{k_m}_mx) = \lambda_\A( d^{l_1}_{1} \cdots d^{l_{j-1}}_{j-1}d^{l_{j+1}}_{j+1}\cdots d^{l_m}_mx).\] 
\end{lem}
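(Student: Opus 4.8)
The plan is to reduce the statement, by a sequence of single coordinate changes, to the defining axiom of an HDA that opposite edges of a $2$-cube carry the same label. First I would observe that it suffices to treat the case where the two tuples $(k_s)_{s \neq j}$ and $(l_s)_{s \neq j}$ differ in exactly one entry, say the $r$-th one with $r \neq j$: a general pair of tuples is connected by a finite chain of such single flips, and transitivity of equality of labels then yields the claim. So the heart of the matter is to show that flipping $k_r$ to $1 - k_r$, with all other superscripts and the free direction $j$ kept fixed, leaves the label of the resulting edge unchanged.

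To prove this single-flip invariance, I would extract a suitable $2$-cube. The edge under consideration is obtained from $x$ by applying the $m-1$ face operators indexed by $\{1, \dots, m\} \setminus \{j\}$; among these, isolate the one in direction $r$. Using the precubical relations $d^k_i d^l_{i'} = d^l_{i'-1} d^k_i$ (for $i < i'$), I would commute the operators so that the $m-2$ face operators in the directions of $\{1,\dots,m\}\setminus\{j,r\}$ are applied first. Their composite sends $x$ to a well-defined $2$-cube $w \in (P_\A)_2$ that depends on $x$ and on the superscripts $k_s$ ($s \neq j, r$) but \emph{not} on $k_r$. After this rearrangement the edge in question is expressed as $d^{k_r}_\varepsilon w$, where $\varepsilon \in \{1,2\}$ is the position that direction $r$ occupies among the two surviving directions $\{j,r\}$ (so $\varepsilon = 1$ if $r < j$ and $\varepsilon = 2$ if $r > j$).

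With the edge written as $d^{k_r}_\varepsilon w$, flipping $k_r$ produces the opposite face $d^{1-k_r}_\varepsilon w$ of the same $2$-cube $w$, while leaving $w$ itself untouched. The HDA axiom $\lambda_\A(d^0_\varepsilon w) = \lambda_\A(d^1_\varepsilon w)$ then gives equality of the two labels, completing the single-flip step and hence the proof.

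I expect the main obstacle to be the index bookkeeping in the commutation step: one must track how the precubical identities shift the indices of the face operators as the $d_s$ with $s \neq j, r$ are moved to the inside, and verify that the surviving operator in direction $r$ ends up with the correct index $\varepsilon \in \{1,2\}$ and unchanged superscript $k_r$. This can be made rigorous either by a careful direct computation or, more cleanly, by an induction on $m$: the base case $m = 2$ is exactly the HDA axiom, and in the inductive step one peels off a single face operator $d^{k_s}_s$ with $s \neq j, r$ (commuted to the innermost position), reducing to the degree-$(m-1)$ cube $d^{k_s}_s x$ and invoking the inductive hypothesis.
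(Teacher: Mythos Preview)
Your proposal is correct and follows essentially the same approach as the paper: reduce to a single flip at position $r \neq j$, form the $2$-cube obtained by applying the $m-2$ face operators in directions $\{1,\dots,m\}\setminus\{j,r\}$, and observe that the two edges in question are its opposite faces $d^{k_r}_\varepsilon$ and $d^{1-k_r}_\varepsilon$ (with $\varepsilon = 1$ if $r<j$ and $\varepsilon = 2$ if $r>j$), so the HDA labeling axiom applies. The paper simply writes down this $2$-cube explicitly rather than discussing the index bookkeeping or the inductive alternative.
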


\begin{proof}
	It is enough to consider the case where $k_r = l_r$ for all $r$ but one, say $i$. Suppose first that $i < j$. Consider the $2$-cube \[y = d^{k_1}_{1} \cdots d^{k_{i-1}}_{i-1}d^{k_{i+1}}_{i+1} \cdots d^{k_{j-1}}_{j-1}d^{k_{j+1}}_{j+1}\cdots d^{k_m}_mx.\] Then $d^{k_i}_1y = d^{k_1}_{1} \cdots d^{k_{j-1}}_{j-1}d^{k_{j+1}}_{j+1}\cdots d^{k_m}_mx$ and $d^{l_i}_1y = d^{l_1}_{1} \cdots d^{l_{j-1}}_{j-1}d^{l_{j+1}}_{j+1}\cdots d^{l_m}_mx$. Thus \[\lambda_\A(d^{k_1}_{1} \cdots d^{k_{j-1}}_{j-1}d^{k_{j+1}}_{j+1}\cdots d^{k_m}_mx) = \lambda_\A( d^{l_1}_{1} \cdots d^{l_{j-1}}_{j-1}d^{l_{j+1}}_{j+1}\cdots d^{l_m}_mx).\] The case $i > j$ is analogous. 	
\end{proof}

\begin{prop} \label{alphaprop}
	\begin{sloppypar}
	Let $\A$ be an HDA satisfying conditions HM1 and HM2 with respect to a $\ltimes\mbox{-}$transition system $\T$. Let $x$ be a cube of degree $m \geq 2$ of $\A$, and let $d^{k_1}_{1} \cdots d^{k_{i-1}}_{i-1}d^{k_{i+1}}_{i+1}\cdots d^{k_m}_mx$ and $d^{l_1}_{1} \cdots d^{l_{j-1}}_{j-1}d^{l_{j+1}}_{j+1}\cdots d^{l_m}_mx$ be edges of $x$ such that $i < j$. Then 
	\[\lambda_\A(d^{k_1}_{1} \cdots d^{k_{i-1}}_{i-1}d^{k_{i+1}}_{i+1}\cdots d^{k_m}_mx) \ltimes_\T  \lambda_\A(d^{l_1}_{1} \cdots d^{l_{j-1}}_{j-1}d^{l_{j+1}}_{j+1}\cdots d^{l_m}_mx).\] 
	\end{sloppypar}
\end{prop}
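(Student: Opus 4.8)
The plan is to reduce the general statement to condition HM2, which is precisely the case $m=2$, by projecting the cube $x$ onto the two-dimensional face spanned by directions $i$ and $j$ and invoking Lemma~\ref{paralleledges} to strip away all dependence on the superscripts.

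First I would observe that, by Lemma~\ref{paralleledges}, the label of each of the two edges in the statement depends only on its free direction (namely $i$, respectively $j$) and not on the particular superscripts chosen in the remaining directions. Hence it suffices to exhibit one convenient pair of edges of $x$, one free in direction $i$ and one free in direction $j$, whose labels satisfy the asserted relation; the labels of the edges actually appearing in the statement then agree with these by Lemma~\ref{paralleledges}.

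Next I would produce such a pair from a single $2$-cube. Let $y$ be the cube obtained from $x$ by applying the face operators $d^0_r$ for all $r \in \{1,\dots,m\}\setminus\{i,j\}$; to avoid reindexing ambiguities I would apply them in order of decreasing $r$, since deleting a higher direction leaves the indices of the lower ones unchanged. Then $y$ is a $2$-cube, and a short count shows that its first coordinate direction corresponds to direction $i$ of $x$ and its second to direction $j$: the $i-1$ deleted directions below $i$ shift $i$ down to $1$, while the $j-2$ deleted directions below $j$ shift $j$ down to $2$. Consequently $d^0_2 y$ is a face of $x$ free in direction $i$, and $d^0_1 y$ is a face of $x$ free in direction $j$.

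Finally I would apply condition HM2 to the $2$-cube $y$, which yields $\lambda_\A(d^0_2 y) \ltimes_\T \lambda_\A(d^0_1 y)$, and then use Lemma~\ref{paralleledges} to replace $\lambda_\A(d^0_2 y)$ and $\lambda_\A(d^0_1 y)$ by the labels of the two edges in the statement. The main obstacle is the bookkeeping in the construction of $y$: one must check carefully, using the commutation relations $d^k_i \circ d^l_j = d^l_{j-1}\circ d^k_i$ for $i<j$, that after deleting the other directions the two surviving directions really become directions $1$ and $2$ of $y$ in the order dictated by $i<j$, so that $d^0_2 y$ and $d^0_1 y$ are genuinely edges of $x$ free in directions $i$ and $j$ respectively. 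Once this is verified, the conclusion is immediate from Lemma~\ref{paralleledges} and HM2.
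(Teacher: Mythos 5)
Your proposal is correct and follows essentially the same route as the paper: reduce to the all-zero superscript case via Lemma~\ref{paralleledges}, form the $2$-cube $y = d^0_1\cdots d^0_{i-1}d^0_{i+1}\cdots d^0_{j-1}d^0_{j+1}\cdots d^0_m x$, identify $d^0_2y$ and $d^0_1y$ with the edges free in directions $i$ and $j$ respectively, and apply HM2. Your index bookkeeping ($i\mapsto 1$, $j\mapsto 2$) matches the paper's identification of the faces of $y$, so nothing is missing.
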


\begin{proof}
	By Lemma \ref{paralleledges}, we may suppose that all $k_r$ and $l_r$ are $0$. Consider the $2$-cube \[y = d^{0}_{1} \cdots d^{0}_{i-1}d^{0}_{i+1} \cdots d^{0}_{j-1}d^{0}_{j+1}\cdots d^{0}_mx.\] Then $d^0_1y = d^{0}_{1} \cdots  d^{0}_{j-1}d^{0}_{j+1}\cdots d^{0}_mx$ and $d^0_2y = d^{0}_{1} \cdots  d^{0}_{i-1}d^{0}_{i+1}\cdots d^{0}_mx$. By condition HM2, 
	$\lambda_\A(d^{0}_{1} \cdots  d^{0}_{i-1}d^{0}_{i+1}\cdots d^{0}_mx) \ltimes_\T  \lambda_\A(d^{0}_{1} \cdots  d^{0}_{j-1}d^{0}_{j+1}\cdots d^{0}_mx)$. 
\end{proof}

\begin{lem} \label{paradigmlem}
	Let $\A$ be an  HDA satisfying conditions HM1 and HM2 with respect to a deterministic $\ltimes$-transition system $\T$ such that $\ltimes_\T$ is asymmetric. Let $x$ and $y$ be two $n\mbox{-}$cubes of $\A$ $(n\geq 2)$ such that 
	\[\{d^0_1\cdots d^0_{i-1}d^0_{i+1}\cdots d^0_nx \,|\, i\in \{1,\dots,n\}\} = \{d^0_1\cdots d^0_{i-1}d^0_{i+1}\cdots d^0_ny \,|\, i\in \{1,\dots,n\}\}.\]
	Then $(x_{\sharp})_{\leq 1} = (y_{\sharp})_{\leq 1}$. 
\end{lem}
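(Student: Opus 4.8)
The plan is to turn the equality of the \emph{unordered} sets of initial edges into an \emph{indexed} equality, and then to propagate that agreement across the whole $1$-skeleton of the cube using determinism. Throughout, write $e^x_i = d^0_1\cdots d^0_{i-1}d^0_{i+1}\cdots d^0_nx$ and $e^y_i = d^0_1\cdots d^0_{i-1}d^0_{i+1}\cdots d^0_ny$ for the initial edges. First I would record two facts about the $e^x_i$. They all start at the common origin vertex $d^0_1\cdots d^0_nx$, since $d^0_1e^x_i$ collapses to this vertex by the precubical identities. Moreover, applying Proposition \ref{alphaprop} to the cube $x$ with all face indices equal to $0$, we get $\lambda_\A(e^x_i)\ltimes_\T\lambda_\A(e^x_j)$ whenever $i<j$. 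Since $\ltimes_\T$ is asymmetric, it is irreflexive, so the labels $\lambda_\A(e^x_1),\dots,\lambda_\A(e^x_n)$ are pairwise distinct; asymmetry further rules out the reverse relation, so $\ltimes_\T$ restricts to the strict linear order $\lambda_\A(e^x_1)<\cdots<\lambda_\A(e^x_n)$ on these labels. The same holds for the $e^y_i$.

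Next I would deduce $e^x_i=e^y_i$ for each $i$. The hypothesis gives $\{e^x_i\}=\{e^y_i\}$, and since each family consists of $n$ edges with pairwise distinct labels, there is a unique permutation $\pi$ with $e^x_i=e^y_{\pi(i)}$. For $i<j$ we have $\lambda_\A(e^y_{\pi(i)})\ltimes_\T\lambda_\A(e^y_{\pi(j)})$, and the asymmetry-induced linear order on the $y$-labels forces $\pi(i)<\pi(j)$. Hence $\pi$ is order preserving, so $\pi=\mathrm{id}$ and $e^x_i=e^y_i$ for all $i$. In particular, $x$ and $y$ share the same origin vertex, and they share the direction labels $\alpha_i:=\lambda_\A(e^x_i)=\lambda_\A(e^y_i)$.

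Finally I would propagate. By Lemma \ref{paralleledges}, every edge of $x$ in direction $i$ has label $\alpha_i$, and the same is true for $y$. By HM1 the $1$-skeleton of $\A$ equals the deterministic transition system $U(\T)$, so any two edges of $\A$ with the same label and the same start vertex coincide. I would then show that $x_\sharp$ and $y_\sharp$ agree on the $1$-skeleton of $\llbracket 0,1\rrbracket^{\otimes n}$ by induction on the number of coordinates equal to $1$ in a vertex $u\in\{0,1\}^n$. The origin agrees by the previous step. If $x_\sharp$ and $y_\sharp$ agree at $u$ and $u'$ is obtained from $u$ by changing a coordinate $i$ from $0$ to $1$, then the edge from $u$ to $u'$ maps under both $x_\sharp$ and $y_\sharp$ to an edge of $\A$ with label $\alpha_i$ and start vertex $x_\sharp(u)=y_\sharp(u)$; by determinism these images coincide, so $x_\sharp$ and $y_\sharp$ agree on this edge and hence at $u'$. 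As every vertex and edge of the cube is reached in this way, we conclude $(x_\sharp)_{\leq 1}=(y_\sharp)_{\leq 1}$.

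The step I expect to be the main obstacle is the first half: extracting the indexed equality $e^x_i=e^y_i$ from the bare equality of unordered sets. The decisive leverage is that asymmetry of $\ltimes_\T$ promotes the one-directional relations supplied by Proposition \ref{alphaprop} into a genuine linear order on the initial-edge labels, which is exactly what pins down the permutation $\pi$. Once the initial edges are matched, the propagation over the cube is routine, using only determinism and Lemma \ref{paralleledges}.
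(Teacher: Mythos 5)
Your proof is correct and follows essentially the same route as the paper's: both first use Proposition \ref{alphaprop} together with asymmetry of $\ltimes_\T$ to show the initial edges are distinct and to force the matching permutation to be the identity, and then propagate agreement over the rest of the $1$-skeleton by induction using Lemma \ref{paralleledges} and determinism. The only difference is cosmetic (you induct on the number of $1$-coordinates of vertices, the paper on the number of upper faces in the edge multi-index), so there is nothing further to add.
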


\begin{proof}
	\begin{sloppypar}
	It suffices to show that for all $i \in \{1, \dots, n\}$ and $k_1, \dots, k_{i-1},k_{i+1}, \dots,k_n \in \{0,1\}$, $d^{k_1}_1 \cdots d^{k_{i-1}}_{i-1}d^{k_{i+1}}_{i+1}\cdots d^{k_n}_nx = d^{k_1}_1 \cdots d^{k_{i-1}}_{i-1}d^{k_{i+1}}_{i+1}\cdots d^{k_n}_ny$. We proceed by induction on the number of indexes $j$ such that $k_j = 1$. Suppose first that all $k_j = 0$. The edges $d^0_1 \cdots d^0_{i-1}d^0_{i+1}\cdots d^0_nx$ are distinct. Indeed, for $i<j$, by Proposition \ref{alphaprop}, \[\lambda_\A(d^0_1 \cdots d^0_{i-1}d^0_{i+1}\cdots d^0_nx) \ltimes_\T \lambda_\A(d^0_1 \cdots d^0_{j-1}d^0_{j+1}\cdots d^0_nx).\] Since $\ltimes_\T$ is irreflexive, it follows that ${d^0_1 \cdots d^0_{i-1}d^0_{i+1}\cdots d^0_nx} \not= {d^0_1 \cdots d^0_{j-1}d^0_{j+1}\cdots d^0_nx}$. The same argument shows that the edges $d^0_1 \cdots d^0_{i-1}d^0_{i+1}\cdots d^0_ny$ are distinct. It follows that there exists a permutation $\phi \in S_n$ such that \[{d^0_1 \cdots d^0_{i-1}d^0_{i+1}\cdots d^0_nx} = {d^0_1 \cdots d^0_{\phi(i)-1}d^0_{\phi(i)+1}\cdots d^0_ny}.\] Let $i< j$. By Proposition \ref{alphaprop}, we have 
	\begin{align*}
	\MoveEqLeft{\lambda_\A(d^0_1 \cdots d^0_{\phi(i)-1}d^0_{\phi(i)+1}\cdots d^0_ny) = \lambda_\A(d^0_1 \cdots d^0_{i-1}d^0_{i+1}\cdots d^0_ny)}\\ &\ltimes_\T \lambda_\A(d^0_1 \cdots d^0_{j-1}d^0_{j+1}\cdots d^0_ny) =  \lambda_\A(d^0_1 \cdots d^0_{\phi(j)-1}d^0_{\phi(j)+1}\cdots d^0_ny).
	\end{align*}
	Since $\ltimes_\T$ is asymmetric, it follows, again by Proposition \ref{alphaprop}, that $\phi(i) < \phi(j)$. This implies that $\phi = id$. 
	\end{sloppypar}
	Let $1 \leq m < n$, and suppose inductively that \[d^{k_1}_1 \cdots d^{k_{i-1}}_{i-1}d^{k_{i+1}}_{i+1}\cdots d^{k_n}_nx = d^{k_1}_1 \cdots d^{k_{i-1}}_{i-1}d^{k_{i+1}}_{i+1}\cdots d^{k_n}_ny\] if less than $m$ of the $k_j$ are equal to $1$. Consider two edges $d^{k_1}_1 \cdots d^{k_{i-1}}_{i-1}d^{k_{i+1}}_{i+1}\cdots d^{k_n}_nx$ and $d^{k_1}_1 \cdots d^{k_{i-1}}_{i-1}d^{k_{i+1}}_{i+1}\cdots d^{k_n}_ny$ where exactly $m$ of the $k_j$ are equal to $1$. Choose an index $j$ such that $k_j = 1$, and set $k_i = 0$. By the inductive hypothesis, $d^{k_1}_1 \cdots d^{k_{j-1}}_{j-1}d^{k_{j+1}}_{j+1}\cdots d^{k_n}_nx = d^{k_1}_1 \cdots d^{k_{j-1}}_{j-1}d^{k_{j+1}}_{j+1}\cdots d^{k_n}_ny$. Hence
	\begin{align*}
		\MoveEqLeft{d^0_1d^{k_1}_1 \cdots d^{k_{i-1}}_{i-1}d^{k_{i+1}}_{i+1}\cdots d^{k_n}_nx
		= d^{k_1}_1 \cdots  d^{k_n}_nx
		= d^1_1d^{k_1}_1 \cdots d^{k_{j-1}}_{j-1}d^{k_{j+1}}_{j+1}\cdots d^{k_n}_nx}\\ &= d^1_1d^{k_1}_1 \cdots d^{k_{j-1}}_{j-1}d^{k_{j+1}}_{j+1}\cdots d^{k_n}_ny
		= d^{k_1}_1 \cdots  d^{k_n}_ny
		= d^0_1d^{k_1}_1 \cdots  d^{k_{i-1}}_{i-1}d^{k_{i+1}}_{i+1}\cdots d^{k_n}_ny.
	\end{align*}
	 By Lemma \ref{paralleledges} and the inductive hypothesis,
	 \begin{align*}
	 \MoveEqLeft{\lambda_\A(d^{k_1}_1 \cdots d^{k_{i-1}}_{i-1}d^{k_{i+1}}_{i+1}\cdots d^{k_n}_nx) = \lambda_\A(d^0_1 \cdots d^0_{i-1}d^0_{i+1}\cdots d^0_nx)}\\
	 &= \lambda_\A(d^0_1 \cdots d^0_{i-1}d^0_{i+1}\cdots d^0_ny) = \lambda_\A(d^{k_1}_1 \cdots d^{k_{i-1}}_{i-1}d^{k_{i+1}}_{i+1}\cdots d^{k_n}_nx).
	 \end{align*}
	 Since $\T$ is deterministic, it follows that \[d^{k_1}_1 \cdots d^{k_{i-1}}_{i-1}d^{k_{i+1}}_{i+1}\cdots d^{k_n}_nx = d^{k_1}_1 \cdots d^{k_{i-1}}_{i-1}d^{k_{i+1}}_{i+1}\cdots d^{k_n}_ny. \qedhere\]
\end{proof}

\begin{thm} \label{paradigm}
	Let $\A$ be an HDA satisfying conditions HM1-HM3 with respect to a deterministic $\ltimes$-transition system $\T$ such that $\ltimes_\T$ is asymmetric. Consider a vertex $v \in (P_\A)_0$ and $n$ action labels $\alpha_1, \dots, \alpha_n \in \Sigma_\A$ $(n \geq 2)$. Then there exists at most one cube $x \in (P_\A)_n$ such that $d^0_1 \cdots d^0_nx = v$ and 
	\[\{\lambda_\A(d^0_1\cdots d^0_{i-1}d^0_{i+1}\cdots d^0_nx) \,|\, i\in \{1,\dots,n\}\} = \{\alpha_1, \dots, \alpha_n\}.\]	
\end{thm}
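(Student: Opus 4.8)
The plan is to prove uniqueness directly: assuming two cubes $x,y\in(P_\A)_n$ both satisfy the stated conditions, I would deduce $x=y$. The whole argument funnels into the two results already available, namely Lemma~\ref{paradigmlem}, which compares $n$-cubes sharing the same \emph{set} of origin-edges, and Proposition~\ref{HM3equiv}, which under HM3 makes the map $z\mapsto(z_{\sharp})_{\leq 1}$ injective. So the only real content is to show that the hypotheses force $x$ and $y$ to have the same set of origin-edges $d^0_1\cdots d^0_{i-1}d^0_{i+1}\cdots d^0_n(-)$, $i\in\{1,\dots,n\}$.

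First I would note that, for either cube, each of its $n$ origin-edges starts at the common corner $v=d^0_1\cdots d^0_nx=d^0_1\cdots d^0_ny$: applying $d^0_1$ to $d^0_1\cdots d^0_{i-1}d^0_{i+1}\cdots d^0_nx$ and commuting the face operators via the precubical relations returns $d^0_1\cdots d^0_nx=v$ (equivalently, this edge is the image under $x_{\sharp}$ of an edge of $\iota_n$ based at the origin). Next, by Proposition~\ref{alphaprop} together with the irreflexivity of the asymmetric relation $\ltimes_\T$, the labels of the origin-edges of $x$ are pairwise distinct, and likewise for $y$. Hence $\{\alpha_1,\dots,\alpha_n\}$ has exactly $n$ elements, and each $\alpha_j$ is the label of exactly one origin-edge of $x$ and exactly one origin-edge of $y$.

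Now determinism of $\T$ does the matching: for a fixed $j$, the unique origin-edge of $x$ labeled $\alpha_j$ and the unique origin-edge of $y$ labeled $\alpha_j$ both start at $v$ and carry the same label, so they coincide. Letting $j$ range over $1,\dots,n$ yields
\[
\{d^0_1\cdots d^0_{i-1}d^0_{i+1}\cdots d^0_nx \mid i\in\{1,\dots,n\}\}=\{d^0_1\cdots d^0_{i-1}d^0_{i+1}\cdots d^0_ny \mid i\in\{1,\dots,n\}\}.
\]
Lemma~\ref{paradigmlem} then gives $(x_{\sharp})_{\leq 1}=(y_{\sharp})_{\leq 1}$, and Proposition~\ref{HM3equiv} (applicable because $\A$ satisfies HM3) gives $x=y$.

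I expect the one delicate point to be the bookkeeping of the previous paragraph: verifying carefully that every origin-edge is based at $v$, so that determinism is legitimately applicable, and that the distinctness of labels (coming from irreflexivity) is precisely what permits a one-to-one matching of origin-edges by label. Beyond that the theorem is essentially a corollary of Lemma~\ref{paradigmlem}: once the two sets of origin-edges are seen to coincide, injectivity from HM3 closes the argument with no further computation.
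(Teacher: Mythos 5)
Your proposal is correct and follows essentially the same route as the paper: match the origin-edges of $x$ and $y$ by label and common starting point $v$ using determinism of $\T$, conclude that the two sets of origin-edges coincide, and then finish with Lemma~\ref{paradigmlem} and Proposition~\ref{HM3equiv}. The only difference is that you additionally establish pairwise distinctness of the labels via Proposition~\ref{alphaprop} and irreflexivity to get a bijective matching; the paper's proof skips this, since mere set equality of the origin-edges (obtained from the two containments given by determinism) already suffices.
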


\begin{proof}
	Consider cubes $x, y \in (P_\A)_n$ having the properties of the statement. Then for each edge $d^0_1\cdots d^0_{i-1}d^0_{i+1}\cdots d^0_nx$, there exists an edge  $d^0_1\cdots d^0_{j-1}d^0_{j+1}\cdots d^0_ny$ with the same label and the same starting point. Since $\T$ is deterministic, the two edges coincide. Hence 
	\[\{d^0_1\cdots d^0_{i-1}d^0_{i+1}\cdots d^0_nx \,|\, i\in \{1,\dots,n\}\} = \{d^0_1\cdots d^0_{i-1}d^0_{i+1}\cdots d^0_ny \,|\, i\in \{1,\dots,n\}\}.\] 
	By Lemma \ref{paradigmlem}, we have $(x_{\sharp})_{\leq 1} = (y_{\sharp})_{\leq 1}$. By Proposition \ref{HM3equiv}, it follows that $x = y$.
\end{proof}

\section{Parallel composition} \label{SecParallel}

The parallel composition of completely independent shared-variable systems can be modeled using the interleaving operator for $\ltimes$-transition systems, which we introduce in this section. We show that an HDA model of the interleaving of two $\ltimes$-transition systems is given by the tensor product of their HDA models.

\subsection{Tensor product of HDAs} The \emph{tensor product} of two HDAs $\A$  and $\B$ is the HDA $\A\otimes \B$ defined by $P_{\A\otimes \B}  = P_{\A}\otimes P_{\B}$, $I_{\A \otimes \B} = (I_{\A},I_{\B})$, $F_{\A \otimes \B} = F_{\A}\times F_{\B}$, $\Sigma_{\A \otimes \B} =  \Sigma_{\A}\amalg \Sigma_{\B}$, and
\[\lambda_{\A\otimes \B} (x,y) = \left \{\begin{array}{ll}
\lambda_{\A}(x), & (x,y) \in (P_{\A})_1\times (P_{\B})_0,\\
\lambda_{\B}(y), & (x,y) \in (P_{\A})_0\times (P_{\B})_1.
\end{array}\right.\]
We remark that the categories $\mathsf{HDA}$ and $\mathsf{HDA_{ext}}$ are monoidal categories with respect to this tensor product.

\subsection{Interleaving of \texorpdfstring{$\ltimes$}{}-transition systems}
\begin{sloppypar}
We define the \emph{interleaving} of two $\ltimes\mbox{-}$transition systems  $\mathcal{S}$ and $\T$ to be the $\ltimes\mbox{-}$transition system $\mathcal{S} \interleave \T$ where \[U(\mathcal{S} \interleave \T) = (U(\S) \otimes U(\T))_{\leq 1}\] and $\alpha \ltimes_{\S \interleave \T} \beta$ if either 
\begin{enumerate}
	\item $\alpha, \beta \in \Sigma_\S$ and $\alpha\ltimes_\S \beta$;
	\item $\alpha, \beta \in \Sigma_\T$ and $\alpha\ltimes_\T \beta$; or
	\item $\alpha \in \Sigma_\S$ and $\beta \in \Sigma_\T$.
\end{enumerate}
With respect to the interleaving operator $\interleave$, the category of $\ltimes$-transition systems is a monoidal category.
\end{sloppypar}

\subsection{Parallel composition of shared-variable systems}
Let $\P^1= (\P^1_1, \dots, \P^1_n)$ and $\P^2 = (\P^2_1, \dots, \P^2_m)$ be two shared-variable systems given by program graphs over the set of variables $Var$. The \emph{parallel composition} of $\P^1$ and $\P^2$ is the shared-variable system $\P = (\P_1, \dots, \P_{n+m})$ given by
\[\P_i = \left\{ \begin{array}{ll}
\vspace{0.2cm}
\P^1_i, & 1 \leq i \leq n,\\
\P^2_{i-n}, & n < i \leq n+m.
\end{array}\right.\]
We shall write $V^1$, $V^2$, and $V$ to denote the sets of declared variables of the systems $\P^1$, $\P^2$, and $\P$, respectively, and use corresponding notations for their state graphs, their initial states, and the ingredients of their program graphs. Let $\eta_0 \in Ev(Var)$ be an initial evaluation. 

\begin{thm} \label{iso1} 
If $V^1 \cap V^2 = \emptyset$, then
\[\T_{\P_1, \dots ,\P_{n+m}} \cong \T_{\P^1_1, \dots ,\P^1_n} \interleave \T_{\P^2_1, \dots ,\P^2_m}.\]
\end{thm}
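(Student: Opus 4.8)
The plan is to construct an explicit isomorphism of $\ltimes$-transition systems between $\T_{\P_1, \dots ,\P_{n+m}}$ and $\T_{\P^1_1, \dots ,\P^1_n} \interleave \T_{\P^2_1, \dots ,\P^2_m}$. First I would unravel both sides at the level of the underlying graded sets. On the left, a state is a tuple $(l_1,\dots,l_{n+m},\eta)$ with $\eta \in Ev(V)$ where $V = V^1 \cup V^2$. Since $V^1 \cap V^2 = \emptyset$, the bijection $\ast$ from Section \ref{transdef} gives $Ev(V) \cong Ev(V^1) \times Ev(V^2)$, so each such state corresponds to a pair consisting of a state $(l_1,\dots,l_n,\eta|_{V^1})$ of the first system and a state $(l_{n+1},\dots,l_{n+m},\eta|_{V^2})$ of the second. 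On the right, a vertex of $U(\S)\otimes U(\T)$ of degree $0$ is exactly such a pair of vertices, so the disjointness of variables is precisely what makes the state sets match. I would define the candidate isomorphism $(f,\sigma)$ on vertices by this correspondence and on labels by $\sigma(i,a) = (1,(i,a))$ for $i \leq n$ and $\sigma(i,a) = (2,(i-n,a))$ for $i > n$, matching the coproduct structure $\Sigma_{\S\interleave\T} = \Sigma_\S \amalg \Sigma_\T$.

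Next I would check edges. On the left, an edge moving process $i$ fixes all locations except $l_i$ and, crucially, only modifies the variables in $V_i \subseteq V^1$ (or $\subseteq V^2$); because $V^1$ and $V^2$ are disjoint, an action of process $i \leq n$ leaves $\eta|_{V^2}$ untouched, and vice versa. So under the identification $Ev(V) \cong Ev(V^1)\times Ev(V^2)$ such an edge becomes either an edge in the first factor paired with a fixed vertex of the second, or a fixed vertex of the first paired with an edge in the second — which is exactly the description of the $1$-cubes of $(U(\S)\otimes U(\T))_{\leq 1}$ coming from $(P_\S)_1 \times (P_\T)_0$ and $(P_\S)_0 \times (P_\T)_1$. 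I would verify that $f$ commutes with $d^0_1$ and $d^1_1$ using the explicit boundary formulas for $Q$ together with the tensor-product boundary formulas, again leaning on disjointness so that applying $a_t$ and restricting to the complementary variable set commute. The labeling compatibility $\lambda(f(x)) = \sigma(\lambda(x))$ is then immediate from the definitions of $\lambda_{\A\otimes\B}$ and $\sigma$.

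I would then confirm that $\sigma$ respects the two relations. By construction of the interleaving relation, $(i,a)\ltimes_{\S\interleave\T}(j,b)$ holds precisely when either both labels come from the same factor and are related there, or the first comes from $\S$ and the second from $\T$. Under $\sigma$ this translates to: $i<j$ within one program-graph block, or $i \leq n < j$ across blocks. But that is exactly the condition $i < j$ defining $\ltimes_{\T_{\P_1,\dots,\P_{n+m}}}$, using the block ordering $V^1$-processes get IDs $1,\dots,n$ and $V^2$-processes get $n+1,\dots,n+m$. So the relations match on the nose, which is really the conceptual heart of why process IDs encode the asymmetric interleaving.

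The one genuine subtlety — the step I expect to be the main obstacle — is reachability. Both $P_\T$ and the factors $P_\S$, $P_\T$ are defined as the reachable subobjects, and the tensor product/interleaving is formed from the full state graphs before restricting. I must show that a combined state $((l_1,\dots,l_n,\xi),(l_{n+1},\dots,l_{n+m},\zeta))$ is reachable in $\P$ if and only if each component is reachable in its respective subsystem. The forward direction follows by projecting a path; for this I would invoke the projection maps $\pi^i$ of Section \ref{paths} applied to $U(\S)\otimes U(\T)$, noting that a path to a combined state projects to paths to each component. For the converse, given paths to each component in the factors, I would use the right inverse of $(\pi^1,\pi^2)$ from Section \ref{paths} to lift them to a single path in the product reaching the combined state; here disjointness of $V^1$ and $V^2$ guarantees that interleaving the two process-local moves is well defined and does not create variable conflicts, so the lifted path stays inside the state graph $Q$ of $\P$. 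Once reachability is shown to factor, the bijection on vertices restricts to the reachable parts, $f$ is a bijection on all cubes by the coskeletal structure (everything here is $1$-truncated), and $(f,\sigma)$ is the desired isomorphism of $\ltimes$-transition systems.
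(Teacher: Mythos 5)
Your proposal is correct and follows essentially the same route as the paper: an explicit isomorphism of state graphs built from the bijection $Ev(V)\cong Ev(V^1)\times Ev(V^2)$, the reachability factorization via the path projections $\pi^i$ and their right inverse from Section \ref{paths}, and the block-wise matching of labels and relations. The only cosmetic difference is that you verify the compatibility of the relations as a single biconditional ($i<j$ overall versus the three block cases), whereas the paper argues the two implications separately and invokes asymmetry of $\ltimes_\T$ to rule out the cross case in the converse; both are valid.
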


\begin{proof}
We write $\T = \T_{\P_1, \dots ,\P_{n+m}}$, $\T^1 = \T_{\P^1_1, \dots ,\P^1_n}$, and $\T^2=\T_{\P^2_1, \dots ,\P^2_m}$. Since $V^1\cap V^2 = \emptyset$, we may define a morphism of precubical sets \[\varphi\colon (Q^1\otimes Q^2)_{\leq 1} \to Q\] by
\[((l^1_1, \dots ,l^1_{n},\eta^1),(l^2_1, \dots ,l^2_{m},\eta^2)) \mapsto (l^1_1, \dots ,l^1_{n},l^2_1, \dots ,l^2_{m},\eta^1\ast \eta^2)\]
in degree $0$ and by 
\begin{align*}
\MoveEqLeft{((l^1_1, \dots, l^1_{i-1},t,l^1_{i+1}, \dots  ,l^1_{n},\gamma,\eta^1),(l^2_1, \dots ,l^2_{m},\eta^2))}\\ & \mapsto (l^1_1, \dots, l^1_{i-1},t,l^1_{i+1} ,\dots, l^1_{n},l^2_1, \dots ,l^2_{m},\gamma,\eta^1\ast \eta^2)
\end{align*}
and 
\begin{align*}
\MoveEqLeft{((l^1_1, \dots ,l^1_{n},\eta^1),(l^2_1, \dots, l^2_{i-1},t,l^2_{i+1}, \dots  ,l^2_{m},\gamma,\eta^2))}\\ & \mapsto (l^1_1, \dots ,l^1_{n},l^2_1, \dots, l^2_{i-1},t,l^2_{i+1} ,\dots, l^2_{m},\gamma,\eta^1\ast \eta^2)
\end{align*}
in degree $1$. It is straightforward to check that $\varphi$ is an isomorphism and that $\varphi (I^1, I^2) = I$. Using the relation between $(Q^1\otimes Q^2)^{\mathbb I}$ and $(Q^1)^{\mathbb I}\times (Q^2)^{\mathbb I}$ (see Section \ref{paths}), one easily establishes that the vertices of $P_{\T^1} \otimes P_{\T^2}$ are precisely the elements of $(Q^1\otimes Q^2)_0$ to which there exists a path from $(I^1, I^2)$. This implies that $\varphi$ restricts to an isomorphism of precubical sets 
\[f\colon P_{\T^1 \interleave \T^2} = (P_{\T^1} \otimes P_{\T^2})_{\leq 1} \xrightarrow{\cong} P_\T.\]
We have $f(I_{\T^1 \interleave \T^2}) = f(I_{\T^1},I_{\T^2}) = \varphi (I^1,I^2) = I = I_\T$ and $f(F_{\T^1\interleave \T^2})  = \emptyset = F_\T$. 
Let $\sigma \colon \Sigma_{\T^1\interleave \T^2} = \Sigma_{\T^1} \amalg \Sigma_{\T^2} \to \Sigma_\T$ be the bijection given by $\sigma (i,a^1) = (i,a^1)$ and $\sigma (i,a^2) = (n+i,a^2)$. Then 
$\lambda_\T \circ f(x,y) = \sigma \circ \lambda_{\T^1\interleave \T^2}(x,y)$ for all edges $(x,y) \in {(P_{\T^1} \otimes P_{\T^2})_{1}}$. It follows that $(f,\sigma) \colon {U(\T^1 \interleave \T^2)} \to U(\T)$ is an isomorphism of HDAs. To conclude that $(f,\sigma)$  is an isomorphism of $\ltimes$-transition systems $\T^1 \interleave \T^2 \to \T$, it remains to show that for all $\alpha, \beta \in \Sigma_{\T^1\interleave \T^2}$, 
\[\alpha \ltimes _{\T^1\interleave \T^2} \beta \Leftrightarrow \sigma(\alpha) \ltimes _{\T} \sigma(\beta).\]

Suppose first that $\alpha \ltimes _{\T^1\interleave \T^2} \beta$.  If $\alpha, \beta \in \Sigma_{\T^1}$ and $\alpha \ltimes _{\T^1} \beta$, then there exist integers $1 \leq i < j \leq n$ and actions $a \in A^1_i$ and $b\in A^1_j$ such that $\alpha = (i,a)$ and $\beta = (j,b)$. Since $\sigma(\alpha) = (i,a)$, $\sigma(\beta) = (j,b)$, and $i <j$, we have $\sigma(\alpha) \ltimes _{\T} \sigma(\beta)$. A similar argument shows that $\sigma(\alpha) \ltimes _{\T} \sigma(\beta)$ if $\alpha, \beta \in \Sigma_{\T^2}$ and $\alpha \ltimes _{\T^2} \beta$. If $\alpha \in \Sigma_{\T^1}$ and $\beta \in \Sigma_{\T^2}$, then there exist integers $1 \leq i \leq n$ and $1 \leq j \leq m$ and actions $a \in A^1_i$ and $b\in A^2_j$ such that $\alpha = (i,a)$ and $\beta = (j,b)$. Since $\sigma(\alpha) = (i,a)$, $\sigma(\beta) = (n+j,b)$, and $i < n+j$, we have $\sigma(\alpha) \ltimes _{\T} \sigma(\beta)$.

Suppose now that $\sigma(\alpha) \ltimes_\T \sigma(\beta)$. If $\alpha, \beta \in \Sigma_{\T^1}$, then there exist integers $1 \leq i, j \leq n$ and actions $a \in A^1_i$ and $b\in A^1_j$ such that $\alpha = (i,a)$ and $\beta = (j,b)$. Since $\sigma(\alpha) = (i,a)$, $\sigma(\beta) = (j,b)$, we have $i <j$ and therefore $\alpha \ltimes_{\T^1} \beta$. Thus $\alpha \ltimes _{\T^1\interleave \T^2} \beta$. A similar argument shows that $\alpha \ltimes _{\T^1\interleave \T^2} \beta$ if $\alpha, \beta \in \Sigma_{\T^2}$. If $\alpha \in \Sigma_{\T^1}$ and $\beta \in \Sigma_{\T^2}$, then $\alpha \ltimes _{\T^1\interleave \T^2} \beta$ in any case. It is not possible that $\alpha \in \Sigma_{\T^2}$ and $\beta \in \Sigma_{\T^1}$. Indeed, this would imply $\beta \ltimes _{\T^1\interleave \T^2} \alpha$ and, by what we have already seen, $\sigma(\beta) \ltimes_\T \sigma(\alpha)$. Since $\ltimes_\T$ is asymmetric, this is impossible. 
\end{proof}

\begin{rem}
Although this will, of course, not happen in general, one may have an isomorphism 
$\T_{\P_1, \dots ,\P_{n+m}} \cong \T_{\P^1_1, \dots ,\P^1_n} \interleave \T_{\P^2_1, \dots ,\P^2_m}$ 
even if $V^1 \cap V^2 \not= \emptyset$.	For example, for the system $(\P_1,\P_2)$ of Example \ref{specex}, one has $\T_{\P_1,\P_2} \cong \T_{\P_1} \interleave \T_{\P_2}$ although the variable $x$ is used by both $\P_1$ and $\P_2$.

\end{rem}

\subsection{HDA models of interleavings} \label{secHDAinter} Let $\mathcal S$ and $\T$ be $\ltimes$-transition systems, and let $\A$ and $\B$ be HDA models of $\mathcal S$ and $\T$, respectively. We establish that the tensor product $\A \otimes \B$ is an HDA model of the interleaving $\mathcal{S} \interleave \T$. Given its intuitive plausibility, it is perhaps surprising that this result does not hold in either the nonextensional or the precubical setting. Consider, for example, a 1-truncated HDA with two vertices and two identically labeled edges from one to the other. While the tensor product of this  nonextensional HDA with an HDA representing an interval has two 2-cubes, its 1-skeleton has four independence squares. Hence the HDA corresponding to the HDA model of the interleaving would have four $2\mbox{-}$cubes. This shows that the result that  $\A \otimes \B$ is an HDA model of $\mathcal{S} \interleave \T$ does not hold without extensionality. It does not hold in the precubical setting either, as is shown by the same example without labels or the even simpler one of two intervals. As a consequence, this result cannot be established by a general abstract argument that also would apply to precubical sets or nonextensional HDAs. 

\begin{prop} \label{conditions} 
	The HDA $\A \otimes \B$ satisfies conditions HM1-HM3 for the $\ltimes\mbox{-}$transition system $\mathcal{S} \interleave \T$.
\end{prop}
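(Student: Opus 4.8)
The plan is to verify conditions HM1, HM2, and HM3 in turn for the HDA $\A \otimes \B$ relative to the interleaving $\S \interleave \T$, drawing on the structural description of the tensor product of precubical sets and the hypotheses that $\A$ and $\B$ are HDA models (hence satisfy HM1--HM3 for $\S$ and $\T$).

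For HM1, I would compute the $1$-skeleton of $\A \otimes \B$ directly from the definitions. Since $(P_{\A \otimes \B})_n = \coprod_{p+q=n}(P_\A)_p \times (P_\B)_q$, the cubes of degree $\leq 1$ are exactly the pairs coming from $(P_\A)_0 \times (P_\B)_0$, $(P_\A)_1 \times (P_\B)_0$, and $(P_\A)_0 \times (P_\B)_1$. This is precisely the underlying graded set of $(P_\A \otimes P_\B)_{\leq 1}$. Combined with the definitions of the initial state, final states, label set, and labeling function of the tensor product of HDAs, and using $\A_{\leq 1} = U(\S)$ and $\B_{\leq 1} = U(\T)$ (which hold because $\A$ and $\B$ are HDA models), I expect this to match $U(\S \interleave \T)$ essentially by unwinding definitions. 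The only mild care needed is to confirm the labeling on the tensor product restricted to degree $1$ agrees with the labeling on $U(\S) \otimes U(\T)$.

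For HM2, I take a $2$-cube of $\A \otimes \B$, which by the coproduct decomposition is of one of three forms: $(x,y)$ with $x \in (P_\A)_2$, $y \in (P_\B)_0$; with $x \in (P_\A)_1$, $y \in (P_\B)_1$; or with $x \in (P_\A)_0$, $y \in (P_\B)_2$. In each case I would compute $d^0_2(x,y)$ and $d^0_1(x,y)$ using the boundary formula for the tensor product, read off their labels, and show the required $\ltimes_{\S \interleave \T}$ relation holds. For the first and third cases the relation reduces to HM2 for $\A$ (resp.\ $\B$) together with clause (1) (resp.\ (2)) of the definition of $\ltimes_{\S \interleave \T}$; for the mixed case $x \in (P_\A)_1$, $y \in (P_\B)_1$, the two front edges carry a label from $\Sigma_\S$ and a label from $\Sigma_\T$ respectively, and the relation holds by clause (3). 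This is a clean case analysis.

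The main obstacle is HM3, the condition that two higher cubes with identical boundaries coincide; this is where extensionality is genuinely needed, as the remarks preceding the proposition stress. My plan is to use the equivalent formulation from Proposition \ref{HM3equiv}: it suffices to show that for each $m$ the map $(P_{\A\otimes\B})_m \to \mathsf{PCS}(\llbracket 0,1\rrbracket^{\otimes m}_{\leq 1},(P_{\A\otimes\B})_{\leq 1})$ sending a cube to its skeleton $(x_\sharp)_{\leq 1}$ is injective. Given two cubes $(x,y)$ and $(x',y')$ in degree $m$ with the same system of vertices and edges, I would first recover the bidegrees $(p,q)$ and $(p',q')$ from the boundary structure and argue they agree, then use the projection maps $\pi^1,\pi^2$ on paths from Section \ref{paths} to extract the edge-and-vertex data of $x,x'$ in $\A$ and of $y,y'$ in $\B$ separately. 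Since $\A$ and $\B$ each satisfy HM3, Proposition \ref{HM3equiv} applied to them yields $x = x'$ and $y = y'$, whence $(x,y)=(x',y')$. The delicate point is disentangling the interleaved skeleton of $(x,y)$ into the separate skeletons of $x$ and $y$ — this is exactly where the right inverse of $(\pi^1,\pi^2)$ and the extensionality of the factors (guaranteeing that a path in a factor is determined by its edge sequence) must be used carefully, and I expect this bookkeeping to be the crux of the argument.
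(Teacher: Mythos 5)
Your treatment of HM1 and HM2 is correct and is essentially the paper's: HM1 is the one-line computation $(\A\otimes\B)_{\leq 1}=(\A_{\leq 1}\otimes\B_{\leq 1})_{\leq 1}=U(\S\interleave\T)$, and HM2 is exactly the three-case analysis you describe, with the mixed case $(P_\A)_1\times(P_\B)_1$ handled by clause (3) of the definition of $\ltimes_{\S\interleave\T}$. For HM3 you take a genuinely different route. The paper works directly with the hypothesis that the codimension-one boundaries of $(x,y)$ and $(a,b)$ agree: it first shows the two cubes have the same bidegree $(s,m-s)$ by a short contradiction argument comparing $d^0_1$ and $d^0_m$ (equal elements of $(P_{\A\otimes\B})_{m-1}$ must lie in the same summand $(P_\A)_p\times(P_\B)_q$), and then reads off $x=a$ and $y=b$ from one or two well-chosen boundaries, invoking HM3 of a single factor only in the extreme cases $s=0$ and $s=m$. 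Your detour through Proposition \ref{HM3equiv} --- reduce to injectivity of the $1$-skeleton map for $\A\otimes\B$ and deduce it from injectivity for $\A$ and $\B$ --- is logically sound and arguably conceptually cleaner, but the disentangling you worry about is easier than you think: you do not need the path projections $\pi^1,\pi^2$ at all, since restricting $((x,y)_\sharp)_{\leq 1}$ to the faces $\llbracket 0,1\rrbracket^{\otimes p}\otimes\{v_0\}$ and $\{w_0\}\otimes\llbracket 0,1\rrbracket^{\otimes q}$ (with $v_0,w_0$ the origin vertices) immediately recovers $(x_\sharp)_{\leq 1}$ and $(y_\sharp)_{\leq 1}$, and the bidegree is read off from which summand any single edge image lies in.

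The one point you should correct is the role of extensionality. It is not needed anywhere in the proof of HM1--HM3, and your proof, carried out, would never actually use it. The remarks preceding the proposition, and the counterexample with two identically labeled parallel edges, concern the failure of the \emph{maximality} condition HM4 (the $1$-skeleton of the tensor product acquires extra independence squares, so the tensor product is a proper subautomaton of the HDA model); HM4 is not part of this proposition and is established in the subsequent lemmas, which is where extensionality genuinely enters. Identifying extensionality as the crux of HM3 is a misreading of where the difficulty lies rather than a gap that would make your argument fail.
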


\begin{proof} 
	(HM1) We have  $(\A\otimes \B) _{\leq 1} = (\A_{\leq 1}\otimes \B_{\leq 1}) _{\leq 1} =  (U(\S) \otimes U(\T))_{\leq 1} = U(\S \interleave \T)$.
	
	(HM2)  Let $(x,y) \in P_{\A \otimes \B}$ be an element of degree $2$.
	If $(x,y) \in (P_\A)_2\times (P_\B)_0$, then, by condition HM2 for $\A$, $\lambda_{\A}(d^0_2x) \ltimes_\S \lambda_{\A}(d^0_1x)$. Hence 
	\begin{align*}
	\MoveEqLeft{\lambda_{\A\otimes \B}(d^0_2(x,y)) = \lambda_{\A\otimes \B}(d^0_2x,y) = \lambda_{\A}(d^0_2x)}\\ &\ltimes_{\S \interleave \T} \lambda_{\A}(d^0_1x) = \lambda_{\A\otimes \B}(d^0_1x,y) = \lambda_{\A\otimes \B}(d^0_1(x,y)).
	\end{align*}
	A similar argument shows that $\lambda_{\A\otimes \B}(d^0_2(x,y)) \ltimes_{\S \interleave \T} \lambda_{\A\otimes \B}(d^1_2(x,y))$ if $(x,y) \in (P_\A)_0\times (P_\B)_2$. If $(x,y) \in (P_\A)_1\times (P_\B)_1$, then 
	\begin{align*}
	\MoveEqLeft{\lambda_{\A\otimes \B}(d^0_2(x,y)) = \lambda_{\A\otimes \B}(x,d^0_1y) = \lambda_{\A}(x)}\\ &\ltimes_{\S \interleave \T} \lambda_{\B}(y) = \lambda_{\A\otimes \B}(d^0_1x,y) = \lambda_{\A\otimes \B}(d^0_1(x,y)). 
	\end{align*}

	(HM3) Let $m \geq 2$ be an integer. Consider elements $(x,y), (a,b) \in (P_{\A \otimes \B})_m$ such that ${d^k_i(x,y) = d^k_i(a,b)}$ for all  $k \in \{0,1\}$ and $i \in \{1, \dots ,m\}$. Let $s, t \in \{0, \dots, m\}$ be the unique integers such that $(x,y) \in (P_\A)_s\times (P_\B)_{m-s}$ and $(a,b) \in {(P_\A)_t \times (P_\B)_{m-t}}$. We have $s=t$. Indeed, suppose that this not the case. If $s = 0$, then $t > 0$. Since $(x,d^0_1y) = d^0_1(x,y) = d^0_1(a,b) = (d^0_1a,b)$, we have $s = t-1$ and therefore $t=1$. Hence $(x,d^0_my) = d^0_m(x,y) = d^0_m(a,b) = (a,d^0_{m-1}b)$. This implies that $s=t$, which we supposed not to be the case. Therefore $s > 0$. For the same reason, $t > 0$. It follows that $(d^0_1x,y) = d^0_1(x,y) = d^0_1(a,b) = (d^0_1a,b)$, which implies that $s=t$. This is a contradiction. Thus $s=t$. 
								
	If $s = t = 0$, then for all $k \in \{0,1\}$ and $i \in \{1, \dots ,m\}$, $(x,d^k_iy) = d^k_i(x,y) = d^k_i(a,b) = (a,d^k_ib)$. Hence $x=a$ and, by condition HM3 for $\B$, $y=b$. An analogous argument shows that $(x,y) = (a,b)$ if $s = t = m$. It remains to consider the case $0 < s=t < m$. In this case we have $(x,d^0_{m-s}y) = d^0_m(x,y) = d^0_m(a,b) = (a,d^0_{m-t}b)$
	and $(d^0_1x,y) = d^0_1(x,y) = d^0_1(a,b) = (d^0_1a,b)$ and therefore $x=a$ and $y=b$.	
\end{proof}

It remains to show that $\A\otimes \B$ satisfies HM4. For the remainder of this section, let $\C$ be an HDA that satisfies conditions HM1-HM3 for $\S \interleave \T$ and that contains $\A \otimes \B$ as a subautomaton. We have to show that $\A\otimes \B = \C$.

\begin{lem} \label{surjective0}
	Let $r\geq 2$, and let $c \in (P_\C)_r$.
	\begin{enumerate}[(i)]
		\item If $d^k_ic \in (P_\A)_0 \times (P_\B)_{r-1}$ for all $k \in \{0,1\}$ and $i \in \{1, \dots, r\}$, then $c \in (P_\A)_0\times (P_\B)_r$. 
		\item If $d^k_ic \in (P_\A)_{r-1} \times (P_\B)_0$ for all $k \in \{0,1\}$ and $i \in \{1, \dots, r\}$, then $c \in (P_\A)_r\times (P_\B)_0$. 
	\end{enumerate}
\end{lem}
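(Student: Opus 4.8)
The plan is to prove only part (i); part (ii) is entirely symmetric, exchanging the roles of $\A$ and $\B$ (and, in the base case, reducing the interleaving relation on two labels of $\Sigma_\S$ to $\ltimes_\S$ instead of $\ltimes_\T$). Fix $c \in (P_\C)_r$ all of whose faces lie in $(P_\A)_0 \times (P_\B)_{r-1}$, and write $d^k_i c = (v^k_i, w^k_i)$ with $v^k_i \in (P_\A)_0$ and $w^k_i \in (P_\B)_{r-1}$. Because each face sits in the summand whose first factor is a vertex, the tensor boundary formula gives $d^k_i(v^l_j, w^l_j) = (v^l_j, d^k_i w^l_j)$. Feeding this into the precubical identities $d^k_i d^l_j c = d^l_{j-1} d^k_i c$ (valid in $\C$, and computed inside the precubical subset $P_{\A \otimes \B}$ of $P_\C$), I would read off, by comparing first coordinates, that $v^l_j = v^k_i$ for all $i < j$ and all $k,l$. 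Since $r \geq 2$, the indices are all linked and collapse to a single vertex $v$; comparing second coordinates gives $d^k_i w^l_j = d^l_{j-1} w^k_i$, i.e. the family $(w^k_i)$ satisfies the compatibility hypothesis of Proposition \ref{filling} in $\B$.

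Next I would apply Proposition \ref{filling} to the HDA model $\B$ with $m = r$ to produce a unique $w \in (P_\B)_r$ with $d^k_i w = w^k_i$. For $r > 2$ no further hypotheses are needed. For $r = 2$ one must supply the label conditions. The equalities $\lambda_\B(w^0_i) = \lambda_\B(w^1_i)$ follow from the HDA labeling condition applied to the $2$-cube $c$ of $\C$ (opposite edges of a $2$-cube carry the same label), after noting that $\lambda_\C(v, w^k_i) = \lambda_\B(w^k_i)$ because $\A \otimes \B$ is a subautomaton of $\C$ and the tensor labeling of an edge $(v, w^k_i)$ with $v$ a vertex is $\lambda_\B(w^k_i)$. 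The independence condition $\lambda_\B(w^0_2) \ltimes_\T \lambda_\B(w^0_1)$ is the main subtlety: condition HM2 for $\C$ gives $\lambda_\C(d^0_2 c) \ltimes_{\S \interleave \T} \lambda_\C(d^0_1 c)$, that is $\lambda_\B(w^0_2) \ltimes_{\S \interleave \T} \lambda_\B(w^0_1)$; since both labels lie in the summand $\Sigma_\T$ of $\Sigma_{\S \interleave \T} = \Sigma_\S \amalg \Sigma_\T$, clause (2) of the definition of the interleaving relation reduces this exactly to $\lambda_\B(w^0_2) \ltimes_\T \lambda_\B(w^0_1)$.

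Finally, with $w$ in hand, I would consider the cube $(v, w) \in (P_\A)_0 \times (P_\B)_r \subseteq P_{\A \otimes \B} \subseteq P_\C$ and compute $d^k_i(v, w) = (v, d^k_i w) = (v, w^k_i) = d^k_i c$. Thus $c$ and $(v,w)$ are two $r$-cubes of $\C$ (with $r \geq 2$) having identical faces, so condition HM3 for $\C$ forces $c = (v, w) \in (P_\A)_0 \times (P_\B)_r$, which is the claim.

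I expect the two genuinely non-formal points to be the collapse of the first coordinates to a single vertex (the combinatorial heart, relying on $r \geq 2$ so that all index pairs are linked) and, for the base case $r = 2$, the translation of HM2 for $\C$ back into $\ltimes_\T$ through the definition of the interleaving relation; everything else is bookkeeping with the tensor boundary formula together with appeals to Proposition \ref{filling} and condition HM3.
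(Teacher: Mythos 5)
Your argument is correct and follows essentially the same route as the paper's proof: decompose the faces as pairs, use the precubical identities to collapse the first coordinates to a single vertex and extract the compatibility of the second coordinates, supply the label and $\ltimes_\T$ hypotheses in the case $r=2$ via the tensor labeling and HM2 for $\C$, invoke Proposition~\ref{filling} in $\B$, and conclude by HM3 for $\C$. No gaps.
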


\begin{proof} 
	We prove statement (i) and leave the analogous proof of statement (ii) to the reader. Suppose that $d^k_ic \in (P_\A)_0 \times (P_\B)_{r-1}$ for all $k \in \{0,1\}$ and $i \in \{1, \dots, r\}$. Write $d^k_ic = (x^k_i,y^k_i)$. Then for $1\leq i<j\leq r$ and $k,l \in \{0,1\}$, 
	\[(x^k_i, d^l_{j-1}y^k_i) = d^l_{j-1}(x^k_i,y^k_i) = d^l_{j-1}d^k_ic = d^k_id^l_{j}c = d^k_{i}(x^l_j,y^l_j) = (x^l_j,d^k_{i}y^l_j)\] 
	and hence $x^k_i = x^l_j$ and $d^k_{i}y^l_j = d^l_{j-1}y^k_i$. Set $x = x^0_1$. Then $x = x^k_i$ for all $k$ and $i$. If $r = 2$, we have
	\[\lambda_\B(y^k_i) = \lambda_{\A\otimes \B}(x^k_i,y^k_i)= \lambda_C(x^k_i,y^k_i) = \lambda_\C(d^k_ic) \]
	and therefore $\lambda_\B(y^0_i) = \lambda_\B(y^1_i)$. Moreover, $\lambda_\B(y^0_2) \ltimes_{\S\interleave \T} \lambda_\B(y^0_1)$ and therefore $\lambda_\B(y^0_2) \ltimes_{ \T} \lambda_\B(y^0_1)$. By Proposition \ref{filling}, there exists an element $y \in (P_\B)_r$ such that $d^k_iy = y^k_i$. Since \[d^k_i(x,y) = (x,d^k_iy) = (x^k_i,y^k_i) = d^k_ic,\] by condition HM3, we have $c = (x,y) \in {(P_\A)_0\times (P_\B)_r}$.  
\end{proof}

\begin{lem} \label{surjetive2} 
	$(\A \otimes \B)_{\leq 2} = \C_{\leq 2}$.
\end{lem}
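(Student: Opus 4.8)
Since $\A \otimes \B$ is a subautomaton of $\C$, the inclusion $(\A \otimes \B)_{\leq 2} \subseteq \C_{\leq 2}$ holds automatically, and condition HM1 (which holds for $\A \otimes \B$ by Proposition \ref{conditions} and for $\C$ by hypothesis) shows that the two HDAs already agree in degrees $\leq 1$, both $1$-skeletons being $U(\S \interleave \T)$. So the statement reduces to showing that every $2$-cube of $\C$ already lies in $P_{\A \otimes \B}$. The first thing I would record is a dictionary for boundary edges: any edge of $P_{\A \otimes \B} = P_\A \otimes P_\B$ is either \emph{horizontal}, i.e. of the form $(e,w) \in (P_\A)_1 \times (P_\B)_0$, or \emph{vertical}, i.e. of the form $(v,f) \in (P_\A)_0 \times (P_\B)_1$, and since $\Sigma_{\A\otimes\B} = \Sigma_\A \amalg \Sigma_\B$ is a disjoint union, the label of an edge determines which type it has (horizontal labels lie in $\Sigma_\A = \Sigma_\S$, vertical ones in $\Sigma_\B = \Sigma_\T$).

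Now fix $c \in (P_\C)_2$. By HM1 all four edges $d^k_i c$ lie in $(P_\C)_1 = (P_{\A\otimes\B})_1$, so each is horizontal or vertical. The HDA labeling axiom gives $\lambda_\C(d^0_i c) = \lambda_\C(d^1_i c)$ for $i \in \{1,2\}$, so by the dictionary $d^0_1 c, d^1_1 c$ have the same type and likewise $d^0_2 c, d^1_2 c$. The key step is to use HM2, which for $\C$ reads $\lambda_\C(d^0_2 c) \ltimes_{\S \interleave \T} \lambda_\C(d^0_1 c)$, together with the definition of $\ltimes_{\S \interleave \T}$. Inspecting the three clauses of that relation, I would rule out the configuration in which $d^0_2 c$ is vertical and $d^0_1 c$ is horizontal (this would force a label in $\Sigma_\T$ to be related to one in $\Sigma_\S$, which no clause permits), leaving exactly three cases: all four edges horizontal, all four vertical, or $d^0_1 c, d^1_1 c$ vertical and $d^0_2 c, d^1_2 c$ horizontal.

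In the two pure cases I would simply invoke Lemma \ref{surjective0} with $r = 2$: if all four faces lie in $(P_\A)_1 \times (P_\B)_0$ then part (ii) gives $c \in (P_\A)_2 \times (P_\B)_0$, and if all lie in $(P_\A)_0 \times (P_\B)_1$ then part (i) gives $c \in (P_\A)_0 \times (P_\B)_2$; either way $c \in P_{\A\otimes\B}$. The remaining mixed case is the heart of the argument. Writing $d^0_1 c = (v_0,f_0)$, $d^1_1 c = (v_1,f_1)$ (vertical) and $d^0_2 c = (e_0,w_0)$, $d^1_2 c = (e_1,w_1)$ (horizontal), the precubical identities $d^k_i d^l_j = d^l_{j-1} d^k_i$ applied at the four corners yield $d^0_1 e_0 = v_0 = d^0_1 e_1$, $d^1_1 e_0 = v_1 = d^1_1 e_1$, $d^0_1 f_0 = w_0 = d^0_1 f_1$, and $d^1_1 f_0 = w_1 = d^1_1 f_1$. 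Thus $e_0, e_1$ are $\A$-edges with the same endpoints and, by the labeling axiom, the same label; similarly for $f_0, f_1$ in $\B$. Here I would use that $\A$ and $\B$ are \emph{extensional} (which holds because, by HM1, their $1$-skeletons are the transition systems $U(\S)$ and $U(\T)$) to conclude $e_0 = e_1 =: x$ and $f_0 = f_1 =: y$. Then $(x,y) \in (P_\A)_1 \times (P_\B)_1$ reproduces all four faces of $c$, so HM3 for $\C$ forces $c = (x,y) \in P_{\A \otimes \B}$.

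I expect the mixed case to be the main obstacle, and specifically the appeal to extensionality: without it, $e_0, e_1$ (and $f_0, f_1$) need only share labels and endpoints, the cube $c$ need not equal a decomposable cube $(x,y)$, and HM4 can genuinely fail --- exactly the phenomenon flagged in the discussion preceding Proposition \ref{conditions}. The routine parts I would not belabor are the corner computations with the face identities and the verification that $(x,y)$ really has the four faces of $c$.
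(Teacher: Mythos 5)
Your proof is correct and takes essentially the same route as the paper's: Lemma \ref{surjective0} disposes of the two pure cases, the disjointness of $\Sigma_\S$ and $\Sigma_\T$ in $\Sigma_{\S\interleave\T}$ together with HM2 pins down the orientation of the mixed case, and extensionality of $\A$ and $\B$ plus HM3 for $\C$ identifies $c$ with the product cube $(x,y)$. No gaps.
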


\begin{proof}
	Let $c\in (P_\C)_2$. We show that $c\in (P_\A \otimes P_\B)_2$. We have $d^k_ic \in (P_\A\otimes P_\B)_1$ for all $ k \in \{0,1\}$ and $i \in \{1, 2\}$. Write $d^k_ic = (x^k_i,y^k_i)$.  By Lemma \ref{surjective0}, we may suppose that there exist $s,t,a,b$ such that $(x^a_s,y^a_s) \in (P_\A)_1 \times (P_\B)_0 = (P_\S)_1 \times (P_\T)_0$ and $ (x^b_t,y^b_t) \in (P_\A)_0 \times (P_\B)_1 = (P_\S)_0 \times (P_\T)_1$. We have 
	\begin{align*}
	\lambda_{\S}(x^a_s)
	&= \lambda_{\S\interleave \T}(x^a_s,y^a_s)\\
	&= \lambda_{\C}(x^a_s,y^a_s)\\
	&= \lambda_{\C}(d^a_sc)\\
	&= \lambda_{\C}(d^{1-a}_sc)\\
	&= \lambda_{\C}(x^{1-a}_s,y^{1-a}_s)\\
	&= \lambda_{\S\interleave \T}(x^{1-a}_s,y^{1-a}_s)\\
	&= \left\{\begin{array}{ll}
	\vspace{0.2cm} 
	\lambda_{\S}(x^{1-a}_s), & (x^{1-a}_s,y^{1-a}_s) \in (P_\S)_1 \times (P_\T)_0,\\
	\lambda_{\T}(y^{1-a}_s), & (x^{1-a}_s,y^{1-a}_s) \in (P_\S)_0 \times (P_\T)_1.
	\end{array} \right.
	\end{align*}
	Since $\Sigma_\S$ and $\Sigma_\T$ are disjoint in $\Sigma_{\S\interleave \T} = \Sigma_\S \amalg \Sigma_\T$, it follows that $(x^{1-a}_s,y^{1-a}_s) \in (P_\S)_1 \times (P_\T)_0$ and that $\lambda_{\S}(x^a_s) = \lambda_{\S}(x^{1-a}_s)$. Similarly, $(x^{1-b}_t,y^{1-b}_t) \in (P_\S)_0 \times (P_\T)_1$ and $\lambda_{\T}(y^b_t) = \lambda_{\T}(y^{1-b}_t)$. Thus we have $s \not= t$, $\lambda_{\C}(x^0_s,y^0_s) = \lambda_{\S}(x^0_s) \in  \Sigma_\S$, and $\lambda_\C(x^0_{t},y^0_{t}) = \lambda_{\T}(y^0_{t}) \in  \Sigma_\T$. Since 
	\[\lambda_\C(x^0_2,y^0_2)= \lambda_\C (d^0_2c) \ltimes_{\S \interleave \T} \lambda_\C(d^0_1c) = \lambda_\C(x^0_1,y^0_1),\]
	it follows that $s = 2$ and $t = 1$. 
	
	For all $k,l \in \{0,1\}$, we have 
	\[(d^k_1x^l_2,y^l_2) = d^k_1(x^l_2,y^l_2) = d^k_1d^l_2c = d^l_1d^k_1c = d^l_1(x^k_1,y^k_1) = (x^k_1,d^l_1y^k_1).\]
	Thus $d^k_1x^l_2 = x^k_1$ and $d^l_1y^k_1 = y^l_2$. Hence $x^0_2$ and $x^1_2$ have the same endpoints and the same label. By extensionality, this implies that $x^0_2 = x^1_2$. Similarly, $y^0_1 = y^1_1$. Set $x = x^0_2 = x^1_2$ and $y = y^0_1 = y^1_1$. We have
	$d^k_1(x,y) = (d^k_1x,y) =  (x^k_1,y^k_1) = d^k_1c$
	and 
	$d^k_2(x,y)  = (x,d^k_1y) = (x^k_2,y^k_2) = d^k_2c$. 
	By condition HM3, $c = (x,y) \in {(P_\A \otimes P_\B)_2}$.	
\end{proof}

\begin{lem} \label{pidlem} 
	Let $r>2$ and $c \in (P_\C)_r$. Let $k \in \{0,1\}$, $i \in\{1, \dots, r\}$, and $j \in \{0, \dots, r-1\}$ such that $d^k_ic \in (P_\A)_j \times (P_\B)_{r-1-j}$.
	\begin{enumerate}[(i)]
		\item If $j \geq 1$, then $\left \{ \begin{array}{ll}
		\vspace{0.2cm}
		\lambda_\C(d^0_1 \cdots d^0_{j-1}d^0_{j+1} \cdots  d^0_{r}c) \in \Sigma_\A, & i > j,\\
		\lambda_\C(d^0_1  \cdots d^0_{j}d^0_{j+2}  \cdots d^0_{r}c) \in \Sigma_\A, & i \leq j.
		\end{array}\right.$
		\vspace{0.2cm}
		\item If $j < r-1$, then $\left \{ \begin{array}{ll}
		\vspace{0.2cm}
		\lambda_\C(d^0_1 \cdots d^0_{j}d^0_{j+2} \cdots  d^0_{r}c) \in \Sigma_\B, & i > j+1,\\
		\lambda_\C(d^0_1  \cdots d^0_{j+1}d^0_{j+3}  \cdots d^0_{r}c) \in \Sigma_\B, & i \leq j+1.
		\end{array}\right.$
	\end{enumerate}
\end{lem}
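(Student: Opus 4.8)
The plan is to derive everything from two facts: the labeling of the edges of a cube of the tensor product $\A \otimes \B$, and the parallel-edge invariance of Lemma~\ref{paralleledges}. Throughout, for a cube $z$ of degree $s \geq 1$ and $l \in \{1, \dots, s\}$, write $z^{(l)} = d^0_1 \cdots d^0_{l-1}d^0_{l+1}\cdots d^0_sz$ for the edge of $z$ obtained by applying every $d^0_p$ with $p \neq l$; thus the edges appearing in the statement are $c^{(j)}$, $c^{(j+1)}$, and $c^{(j+2)}$.

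First I would record the tensor-product fact. Since $d^k_ic \in (P_\A)_j \times (P_\B)_{r-1-j} \subseteq P_{\A\otimes\B}$ and $\A \otimes \B$ is a subautomaton of $\C$, the face $d^k_ic$ is a cube of $\A\otimes\B$ of degree $r-1$; writing $d^k_ic = (u,v)$ with $u \in (P_\A)_j$ and $v \in (P_\B)_{r-1-j}$, the definition of the boundary operators of a tensor product shows that $(d^k_ic)^{(l)} = (\bar u, w)$ with $\bar u \in (P_\A)_1$, $w \in (P_\B)_0$ when $l \leq j$, and $(d^k_ic)^{(l)} = (w', \bar v)$ with $w' \in (P_\A)_0$, $\bar v \in (P_\B)_1$ when $l > j$. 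By the definition of $\lambda_{\A\otimes\B}$ and the fact that $\lambda_\C$ restricts to $\lambda_{\A\otimes\B}$ on the edges of $\A\otimes\B$, it follows that $\lambda_\C\big((d^k_ic)^{(l)}\big) \in \Sigma_\A$ for $l \leq j$ and $\lambda_\C\big((d^k_ic)^{(l)}\big) \in \Sigma_\B$ for $l > j$.

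Next I would relate the edges of $d^k_ic$ to those of $c$. Commuting $d^k_i$ into the string of face operators by means of the precubical identities $d^k_ad^l_b = d^l_{b-1}d^k_a$ $(a<b)$, one sees that $(d^k_ic)^{(l)}$ collapses every coordinate of $c$ except the coordinate $m = l$ when $l < i$ and the coordinate $m = l+1$ when $l \geq i$; hence $(d^k_ic)^{(l)}$ is of the form $d^{\varepsilon_1}_1 \cdots d^{\varepsilon_{m-1}}_{m-1}d^{\varepsilon_{m+1}}_{m+1}\cdots d^{\varepsilon_r}_rc$ for suitable signs $\varepsilon_p$, and Lemma~\ref{paralleledges} gives
\[
\lambda_\C\big((d^k_ic)^{(l)}\big) = \begin{cases} \lambda_\C\big(c^{(l)}\big), & l < i,\\ \lambda_\C\big(c^{(l+1)}\big), & l \geq i.\end{cases}
\]

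Finally I would read off the four assertions by choosing $l$ appropriately. For part~(i), where $j \geq 1$ guarantees that $l = j$ is a legitimate direction of the $(r-1)$-cube $d^k_ic$, I take $l = j$: if $i > j$ then $l < i$ and the two displays give $\lambda_\C(c^{(j)}) = \lambda_\C\big((d^k_ic)^{(j)}\big) \in \Sigma_\A$, while if $i \leq j$ then $l \geq i$ and they give $\lambda_\C(c^{(j+1)}) = \lambda_\C\big((d^k_ic)^{(j)}\big) \in \Sigma_\A$. Part~(ii) is the symmetric computation with $l = j+1$, which is legitimate because $j < r-1$: for $i > j+1$ one gets $\lambda_\C(c^{(j+1)}) \in \Sigma_\B$, and for $i \leq j+1$ one gets $\lambda_\C(c^{(j+2)}) \in \Sigma_\B$, both because $l = j+1 > j$. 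The routine parts are the tensor-product labeling and this final substitution; the hard part will be the middle step, namely the index bookkeeping that justifies the shift $m = l$ versus $m = l+1$ and confirms that $(d^k_ic)^{(l)}$ is indeed one of the parallel edges covered by Lemma~\ref{paralleledges}. The off-by-one between the omitted index $j$ in one branch and $j+1$ in the other is exactly where a careless argument would go wrong.
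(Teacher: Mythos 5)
Your proof is correct and follows essentially the same route as the paper: compute the label of the edge $d^0_1\cdots d^0_{l-1}d^0_{l+1}\cdots d^0_{r-1}d^k_ic$ from the tensor-product structure of $d^k_ic$, then commute $d^k_i$ through the face operators and invoke Lemma~\ref{paralleledges} to identify it with the label of the corresponding edge of $c$, with the same $l=j$ versus $l=j+1$ choices and the same $m=l$/$m=l+1$ index shift. The only difference is presentational (your $z^{(l)}$ notation and the ``surviving direction'' phrasing of the bookkeeping), not mathematical.
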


\begin{proof}\leavevmode
  \begin{enumerate}[(i)]
	\item Suppose that $j \geq 1$. Write $d^k_ic = (a,b)$. We have 
	\begin{align*}
	\lambda_\C(d^0_1 \cdots d^0_{j-1}d^0_{j+1} \cdots d^0_{r-1}d^k_ic)  
	&= \lambda_\C (d^0_1 \cdots d^0_{j-1}d^0_{j+1} \cdots d^0_{r-1}(a,b))\\
	&= \lambda_\C (d^0_1 \cdots d^0_{j-1}a,d^0_{1} \cdots d^0_{r-1-j}b)\\
	&= \lambda_{\A\otimes \B} (d^0_1 \cdots d^0_{j-1}a,d^0_{1} \cdots d^0_{r-1-j}b)\\
	&= \lambda_{\A} (d^0_1 \cdots d^0_{j-1}a).
	\end{align*}
	On the other hand, by Lemma \ref{paralleledges},
	\begin{align*}
	\MoveEqLeft{\lambda_\C(d^0_1 \cdots d^0_{j-1}d^0_{j+1} \cdots d^0_{r-1}d^k_ic) }\\
	&= \left \{ \begin{array}{ll}
	\vspace{0.2cm}
	\lambda_\C(d^0_1 \cdots d^0_{j-1}d^0_{j+1} \cdots d^0_{i-1}d^k_id^0_{i+1} \cdots d^0_{r}c), & i > j,\\
	\lambda_\C(d^0_1 \cdots d^0_{i-1}d^k_id^0_{i+1} \cdots d^0_{j}d^0_{j+2}  \cdots d^0_{r}c), & i \leq j\\
	\end{array}\right.\\
	&= \left \{ \begin{array}{ll}
	\vspace{0.2cm}
	\lambda_\C(d^0_1 \cdots d^0_{j-1}d^0_{j+1} \cdots  d^0_{r}c), & i > j,\\
	\lambda_\C(d^0_1  \cdots d^0_{j}d^0_{j+2}  \cdots d^0_{r}c), & i \leq j.
	\end{array}\right.
	\end{align*}
	Hence
	\[\left \{ \begin{array}{ll}
	\vspace{0.2cm}
	\lambda_\C(d^0_1 \cdots d^0_{j-1}d^0_{j+1} \cdots  d^0_{r}c) = \lambda_{\A} (d^0_1 \cdots d^0_{j-1}a) \in \Sigma_\A, & i > j,\\
	\lambda_\C(d^0_1  \cdots d^0_{j}d^0_{j+2}  \cdots d^0_{r}c) = \lambda_{\A} (d^0_1 \cdots d^0_{j-1}a) \in \Sigma_\A, & i \leq j. 
	\end{array} \right. 
	\]
	\item Analogous. \qedhere
  \end{enumerate}
\end{proof}

\begin{lem} \label{alphalem}
	Let $c$ be an element of $(P_\C)_r$ $(r\geq 2)$, and let $p\in \{1,\dots, r\}$ be an integer such that $\lambda_\C(d^0_1 \cdots d^0_{p-1}d^0_{p+1} \cdots  d^0_{r}c) \in \Sigma_\S$. Then $\lambda_\C(d^0_1 \cdots d^0_{j-1}d^0_{j+1} \cdots  d^0_{r}c) \in \Sigma_\S$ for all $j\in \{1, \dots , p\}$.
\end{lem}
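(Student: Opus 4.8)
The plan is to reduce everything to Proposition \ref{alphaprop} together with the one-sided nature of clause (3) in the definition of $\ltimes_{\S \interleave \T}$. For brevity I would write $e_j = d^0_1 \cdots d^0_{j-1}d^0_{j+1} \cdots d^0_{r}c$ for the edge of $c$ obtained by omitting the face map at position $j$, so that the hypothesis reads $\lambda_\C(e_p) \in \Sigma_\S$ and the conclusion is that $\lambda_\C(e_j) \in \Sigma_\S$ for every $j \in \{1, \dots, p\}$. Since $\C$ satisfies conditions HM1 and HM2 with respect to $\S \interleave \T$ and $c$ has degree $r \geq 2$, Proposition \ref{alphaprop} applies to the cube $c$ and yields, for all indices $i < j$, the relation $\lambda_\C(e_i) \ltimes_{\S \interleave \T} \lambda_\C(e_j)$.

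The argument is then a short proof by contradiction. Suppose $\lambda_\C(e_j) \notin \Sigma_\S$ for some $j \in \{1, \dots, p\}$. Because $\lambda_\C(e_p) \in \Sigma_\S$ we must have $j \neq p$, hence $j < p$; and because $\Sigma_{\S \interleave \T} = \Sigma_\S \amalg \Sigma_\T$, the label $\lambda_\C(e_j)$ then lies in $\Sigma_\T$. Applying Proposition \ref{alphaprop} with the pair of indices $j < p$ gives
\[\lambda_\C(e_j) \ltimes_{\S \interleave \T} \lambda_\C(e_p),\]
an instance of the interleaving relation whose first entry lies in $\Sigma_\T$ and whose second entry lies in $\Sigma_\S$. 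Inspecting the three defining clauses of $\ltimes_{\S \interleave \T}$ shows this is impossible: clauses (1) and (2) keep both entries on the same side, and clause (3) relates a $\Sigma_\S$-label to a $\Sigma_\T$-label but never the reverse. This contradiction forces $\lambda_\C(e_j) \in \Sigma_\S$, which is the claim.

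I do not expect a genuine obstacle here: the entire content is recognizing that the asymmetry built into the interleaving relation---namely that $\Sigma_\S$-labels may precede $\Sigma_\T$-labels but never the other way around---combined with the ordering $\lambda_\C(e_i) \ltimes_{\S \interleave \T} \lambda_\C(e_j)$ for $i < j$ supplied by Proposition \ref{alphaprop}, forces all the ``earlier'' edges onto the $\S$-side once $e_p$ is there. The one point to watch is the direction of the inequality: Proposition \ref{alphaprop} relates the smaller-index edge to the larger-index one, so it is precisely the edges at positions $j \leq p$ (and not $j \geq p$) that get pinned to $\Sigma_\S$, matching the direction of the stated conclusion.
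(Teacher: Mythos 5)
Your proposal is correct and follows essentially the same route as the paper: apply Proposition \ref{alphaprop} to the pair $j<p$ to get $\lambda_\C(e_j)\ltimes_{\S\interleave\T}\lambda_\C(e_p)$, then read off from the three defining clauses of $\ltimes_{\S\interleave\T}$ that a label related on the left to an element of $\Sigma_\S$ must itself lie in $\Sigma_\S$. The paper phrases this as a direct deduction rather than a contradiction, but the content is identical.
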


\begin{proof}
	Let $j\in \{1, \dots , p-1\}$. By Proposition \ref{alphaprop},
	\[\lambda_\C(d^0_1 \cdots d^0_{j-1}d^0_{j+1} \cdots  d^0_{r}c) \ltimes_{\S \interleave \T} \lambda_\C(d^0_1 \cdots d^0_{p-1}d^0_{p+1} \cdots  d^0_{r}c).\]
	By definition of the relation $\ltimes_{\S \interleave \T}$, it follows that 
	\[\lambda_\C(d^0_1 \cdots d^0_{j-1}d^0_{j+1} \cdots  d^0_{r}c) \in \Sigma_\S.\qedhere\]
\end{proof}

\begin{prop}
	$\A \otimes \B = C$.
\end{prop}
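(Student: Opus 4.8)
The plan is to verify condition \textbf{HM4} for $\A \otimes \B$: since $\A \otimes \B$ already satisfies \textbf{HM1}--\textbf{HM3} by Proposition \ref{conditions} and is a subautomaton of the given $\C$ (which also satisfies \textbf{HM1}--\textbf{HM3}), it suffices to prove $(P_\C)_r \subseteq (P_{\A\otimes \B})_r$ for every $r$, and I would do this by induction on $r$. Lemma \ref{surjetive2} settles $r \leq 2$ and serves as the base case. So I would fix $r \geq 3$, assume $(P_\C)_{r-1} = (P_{\A\otimes \B})_{r-1}$, and take $c \in (P_\C)_r$; then every face $d^k_i c$ already lies in $P_\A \otimes P_\B$.

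First I would analyze the main edges $e_i = d^0_1\cdots d^0_{i-1}d^0_{i+1}\cdots d^0_r c$. By Lemma \ref{alphalem} their $\Sigma_\S$-labels form an initial segment, so there is an integer $s$ with $\lambda_\C(e_i)\in \Sigma_\S$ for $i\leq s$ and $\lambda_\C(e_i)\in\Sigma_\T$ for $i>s$. I would then use Lemma \ref{pidlem} to pin down the product-type of each face: if some $d^k_i c$ lay in $(P_\A)_j\times (P_\B)_{r-1-j}$ with $j\geq 1$ it would force a main edge of $c$ into $\Sigma_\A$, while $j<r-1$ would force one into $\Sigma_\B$; combined with the value of $s$ this determines $j$. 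In particular, when $s=0$ every face is of type $(P_\A)_0\times(P_\B)_{r-1}$ and Lemma \ref{surjective0}(i) yields $c\in (P_\A)_0\times (P_\B)_r$; symmetrically, $s=r$ is handled by Lemma \ref{surjective0}(ii). This disposes of the two boundary cases.

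For $0<s<r$, the same label bookkeeping (counting the $\Sigma_\A$-labelled main edges, whose labels by Lemma \ref{paralleledges} may be read off the all-zero faces) shows that $d^k_i c \in (P_\A)_{s-1}\times (P_\B)_{r-s}$ for $i\leq s$ and $d^k_i c\in (P_\A)_s\times (P_\B)_{r-1-s}$ for $i>s$. Writing $d^k_i c = (a^k_i,b^k_i)$, I would next extract a single pair $(x,y)\in (P_\A)_s\times (P_\B)_{r-s}$: feeding the precubical identities $d^k_id^l_j c = d^l_{j-1}d^k_i c$ through the tensor face formula shows that $a^k_i$ is independent of $k$ and of $i>s$, that $b^k_i$ is independent of $k$ and of $i\leq s$, and moreover that $a^k_i = d^k_i x$ for $i\leq s$ and $b^k_i = d^k_{i-s}y$ for $i>s$, where $x$ and $y$ denote these common components. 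Consequently $d^k_i(x,y) = d^k_i c$ for all $k$ and $i$, and condition \textbf{HM3} for $\C$ forces $c=(x,y)\in P_\A\otimes P_\B$, completing the induction.

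I expect the last paragraph to be the main obstacle. The delicate point is the reconstruction of the two global cubes $x$ and $y$ from the family of faces: the commutation identities relate faces in \emph{different} directions, so establishing that $a^k_i$ does not depend on the superscript $k$ (the parallel-face directions that the identities do not connect) needs a separate argument, e.g.\ checking that $a^0_i$ and $a^1_i$ share all of their own faces and invoking condition \textbf{HM3} for $\A$ (and dually \textbf{HM3} for $\B$). Keeping the index shifts straight when passing between the $\A$- and $\B$-ranges under the tensor face formula is where the care is needed; once the components are shown to cohere, the appeal to \textbf{HM3} for $\C$ is immediate.
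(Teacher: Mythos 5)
Your overall strategy coincides with the paper's: induction on degree with Lemma \ref{surjetive2} as base case, Lemma \ref{alphalem} to locate the split index, Lemma \ref{pidlem} to determine the bidegree of each face, Lemma \ref{surjective0} for the extreme cases, and finally the reconstruction of a pair $(x,y)$ whose faces agree with those of $c$ so that HM3 for $\C$ forces $c=(x,y)$. The one place where your proposal has a genuine gap is exactly the step you flag as delicate, and the fix you propose there does not suffice. You want to show that the $\A$-components $a^k_i$ of the faces $d^k_ic$ with $i>s$ all coincide by observing that they share all of their own faces (namely the $a^l_j$ with $j\leq s$, which the commutation identities show are independent of the choice of $(k,i)$) and then invoking HM3 for $\A$. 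But these components have degree $s$, and condition HM3 only identifies cubes of degree $\geq 2$ with equal boundaries. When $s=1$ the $a^k_i$ are edges, and two edges with the same endpoints need not be equal under HM3; dually, when $s=r-1$ the $\B$-components $b^k_i$ with $i\leq s$ are edges and HM3 for $\B$ is vacuous.

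This is precisely the point where extensionality must enter: in the case $s=1$ one shows, using Lemma \ref{paralleledges} and the fact that $\lambda_\C$ restricted to $P_\A\otimes P_\B$ is $\lambda_{\A\otimes\B}$, that the edges $a^k_i$ ($i>s$) all carry the same label as well as the same endpoints, and then extensionality of $\A$ identifies them (and dually for $\B$ when $s=r-1$). This cannot be sidestepped, since the paper points out at the start of Section \ref{secHDAinter} that the theorem is false for nonextensional HDAs -- so any proof that never uses extensionality beyond Lemma \ref{surjetive2} must be incomplete. Everything else in your outline, including the slightly reorganized treatment of the boundary cases $s=0$ and $s=r$ via Lemma \ref{surjective0} and the index bookkeeping for the tensor face formula, matches the paper's argument and is sound.
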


\begin{proof}
	By Lemma \ref{surjetive2}, $(\A \otimes \B)_{\leq 2} = \C_{\leq 2}$. Let $r > 2$, and suppose inductively that ${(\A \otimes \B)_{\leq r-1}} = \C_{\leq r-1}$. Let $c \in (P_\C)_r$.  We show that $c \in (P_\A\otimes P_\B)_r$. By the induction hypothesis, for all $k \in \{0,1\}$ and $i \in \{1, \dots, r\}$, there exist elements $(x^k_i,y^k_i) \in (P_\A \otimes P_\B)_{r-1}$ such that $d^k_ic = (x^k_i,y^k_i)$. By Lemma \ref{surjective0}, we may suppose that for some $k,i$, $d^k_ic \notin {(P_\A)_0 \times (P_\B)_{r-1}}$ and that for some $k',i'$, $d^{k'}_{i'}c \notin {(P_\A)_{r-1} \times (P_\B)_{0}}$. By Lemma \ref{pidlem}(i), there exists an integer $j \in \{1, \dots, r\}$ such that $\lambda_\C(d^0_1 \cdots d^0_{j-1}d^0_{j+1} \cdots d^0_rc) \in \Sigma_\A$. Let $p$ be the largest such $j$. Then $\lambda_\C(d^0_1 \cdots d^0_{j-1}d^0_{j+1} \cdots  d^0_{r}c) \in \Sigma_\B$ for all $j > p$ and, by Lemma \ref{alphalem},  $\lambda_\C(d^0_1 \cdots d^0_{j-1}d^0_{j+1} \cdots  d^0_{r}c) \in \Sigma_\A$ for all $j\leq p$. By Lemma \ref{pidlem}(ii), we have $p \in \{1, \dots, r-1\}$.

	Consider $i > p$ and $k \in \{0,1\}$. We show that $d^k_ic \in (P_\A)_p\times (P_\B)_{r-1-p}$. Suppose that $d^k_ic \in (P_\A)_s\times (P_\B)_{r-1-s}$. It follows from Lemma \ref{pidlem}(i) that $s \leq p$. If we had $s < p$, we would have $s < r-1$ and $i > s+1$. By Lemma \ref{pidlem}(ii), we would obtain $p < s+1 \leq p$, which is impossible. Thus $s = p$.
	
	Consider $i \leq p$ and $k \in \{0,1\}$. We show that $d^k_ic \in (P_\A)_{p-1}\times (P_\B)_{r-p}$. Suppose that $d^k_ic \in (P_\A)_s\times (P_\B)_{r-1-s}$. Since $d^0_rc \in (P_\A)_p\times (P_\B)_{r-1-p}$, we have $d^k_id^0_rc = d^k_i(x^0_r,y^0_r) =  (d^k_ix^0_r,y^0_r)$
	and
	\[
	d^0_{r-1}d^k_ic = d^0_{r-1}(x^k_i,y^k_i)
	= \left \{ \begin{array}{ll}
	\vspace{0.2cm}
	(d^0_{r-1}x^k_i,y^k_i) , & s= r-1,\\
	(x^k_i,d^0_{r-1-s}y^k_i) , & s <r-1.
	\end{array} \right.
	\]
	We obtain that $p-1\leq s \leq p$. Suppose that $s = p$. Then $i \leq s = r-1$. By Lemma \ref{pidlem}(i), this implies that $\lambda_\C(d^0_1 \cdots d^0_{r-1}c) \in \Sigma_\A$. This is impossible because $p < r$. Thus $s = p-1$.
	
	For $1\leq i \leq p < j \leq r$ and $k,l \in \{0,1\}$, $d^k_id^l_jc = d^k_i(x^l_j,y^l_j) =  (d^k_ix^l_j,y^l_j)$
	and $d^l_{j-1}d^k_ic = d^l_{j-1}(x^k_i,y^k_i) = (x^k_i,d^l_{j-p}y^k_i)$.
	We obtain $d^k_ix^l_j = x^k_i$ and $d^l_{j-p}y^k_i = y^l_j$. If $1<p<r-1$, it follows, by condition HM3 for $\A$ and $\B$, that all $x^l_j$ with $j >p$ and all $y^k_i$ with $i \leq p$ are equal. By extensionality, this also holds if $p=1$ or $p=r-1$. Indeed, if $p=1$,  the $x^l_j$ with $j >p$ are not just edges with the same endpoints. By Lemma \ref{paralleledges}, they also have the same label:
	\begin{align*}
	\lambda_\A(x^l_j) &= \lambda_\C(x^l_j, d^0_1\cdots d^0_{r-2}y^l_j)\\
	&= \lambda_\C(d^0_2\cdots d^0_{r-1}(x^l_j, y^l_j))\\
	&= \lambda_\C(d^0_2\cdots d^0_{r-1}d^l_jc)\\
	&= \lambda_\C(d^0_2\cdots d^0_{r}c).
	\end{align*}
	Similarly, if $p = r-1$, all $y^k_i$ with $i \leq p$ have the same label. Set $x = x^0_r$ and $y = y^0_1$. Then for all $k,i$, 
	\[
	d^k_i(x,y) = \left \{ \begin{array}{ll}
	\vspace{0.2cm}
	(d^k_ix,y), & i \leq p,\\
	(x,d^k_{i-p}y), & i > p
	\end{array}\right.
	= (x^k_i,y^k_i) = d^k_ic.
	\]
	Since $\C$ satisfies condition HM3, it follows that $c = (x,y) \in (P_\A)_p\times (P_\B)_{r-p}$. 
\end{proof}

As a consequence we obtain our main result:

\begin{thm} \label{mainresult}
	The tensor product $\A \otimes \B$ is an HDA model of the interleaving $\mathcal{S} \interleave \T$.
\end{thm}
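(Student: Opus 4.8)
The plan is to verify the four defining conditions HM1--HM4 of an HDA model. Three of them come for free: Proposition~\ref{conditions} already shows that $\A \otimes \B$ satisfies HM1--HM3 with respect to $\S \interleave \T$. Hence the whole argument reduces to the maximality condition HM4, and the theorem will follow by combining Proposition~\ref{conditions} with the maximality statement.

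To establish HM4 I must show that $\A \otimes \B$ is not a proper subautomaton of any HDA satisfying HM1--HM3; equivalently, that any HDA $\C$ which satisfies HM1--HM3 for $\S \interleave \T$ and contains $\A \otimes \B$ as a subautomaton must actually equal $\A \otimes \B$. This equality is exactly the content of the (unlabelled) Proposition immediately preceding the theorem, so once that Proposition is in hand the proof of the theorem is a one-line assembly.

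It is worth recording where the real work lies, namely in proving $\A \otimes \B = \C$. That proof is an induction on the degree $r$, the goal being $(P_\C)_r \subseteq (P_\A \otimes P_\B)_r$; the base case $r \leq 2$ is Lemma~\ref{surjetive2}. In the inductive step, Lemma~\ref{surjective0} disposes of the degenerate cases in which every face of $c \in (P_\C)_r$ lies in a single tensor factor, so one may assume $c$ is genuinely mixed and must be reconstructed as a pair $(x,y) \in (P_\A)_p \times (P_\B)_{r-p}$. The crux is to locate the split index $p$: Lemma~\ref{pidlem} reads off, from the position of each face $d^k_i c$ within $P_\A \otimes P_\B$, whether the relevant edges of $c$ are labelled in $\Sigma_\A$ or in $\Sigma_\B$, and Lemma~\ref{alphalem} shows, using the definition of $\ltimes_{\S \interleave \T}$, that the $\Sigma_\A$-labelled directions form an initial segment $\{1, \dots, p\}$. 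Once $p$ is fixed, the induction hypothesis identifies the faces of $c$ with faces of the prospective pair, extensionality collapses the parallel faces so that $x$ and $y$ are well defined, and condition HM3 for $\C$ forces $c = (x,y)$.

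The step I expect to be the main obstacle is exactly this determination of $p$ together with the verification that all faces of $c$ split consistently across the single index $p$; this is where extensionality is essential, and, as the remark preceding Proposition~\ref{conditions} points out, the statement is genuinely false for nonextensional or merely precubical data, so no soft abstract argument can replace it. Granting the two propositions, however, the theorem itself is immediate.
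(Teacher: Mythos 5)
Your proposal is correct and follows the paper's own route exactly: the theorem is obtained by combining Proposition~\ref{conditions} (conditions HM1--HM3) with the unlabelled proposition $\A \otimes \B = \C$ establishing maximality, and your sketch of that proposition's proof (induction on degree with base case Lemma~\ref{surjetive2}, Lemma~\ref{surjective0} for the degenerate cases, Lemmas~\ref{pidlem} and~\ref{alphalem} to locate the split index $p$, then extensionality and HM3) matches the paper's argument. No discrepancies to report.
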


By Theorem \ref{mainresult} and Corollary \ref{isomodel}, there exists a unique isomorphism \[\mathscr{A}(\S) \otimes \mathscr{A}(\T) \xrightarrow{\cong} \mathscr{A}(\S \interleave \T)\] that restricts to the identity on $U(\S \interleave \T)$. Considering this isomorphism, which is natural by Proposition \ref{exmor}, one easily establishes the following result:

\begin{thm}
	The functor $\mathscr{A}$ is a strong monoidal functor.
\end{thm}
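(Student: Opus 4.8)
The plan is to take as the multiplicative comparison the natural isomorphism $\varphi_{\S,\T}\colon\mathscr{A}(\S)\otimes\mathscr{A}(\T)\xrightarrow{\cong}\mathscr{A}(\S\interleave\T)$ exhibited just above, which restricts to the identity on $U(\S\interleave\T)$ in degrees $\leq 1$ and is natural by Proposition \ref{exmor}. For the unit comparison, I would observe that the monoidal unit of $\mathsf{TS_{\ltimes}}$ is the one-state $\ltimes$-transition system $E$ with no transitions and empty relation; since $E$ has no independence squares, there are no empty cubes to fill in, so $\mathscr{A}(E)$ is the one-vertex HDA, which is precisely the monoidal unit of $\mathsf{HDA_{ext}}$. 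The unit comparison is thus an isomorphism, and it only remains to verify the associativity and unit coherence axioms for $(\mathscr{A},\varphi)$.

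The argument for the coherence axioms rests on a single uniqueness principle: by Proposition \ref{exmor}, a morphism out of an HDA satisfying HM1 and HM2 into an HDA model is completely determined by its restriction to the underlying transition systems in degrees $\leq 1$. For the associativity axiom, both composites in the associativity coherence diagram are morphisms
\[
(\mathscr{A}(\S)\otimes\mathscr{A}(\T))\otimes\mathscr{A}(\mathcal U)\longrightarrow\mathscr{A}(\S\interleave(\T\interleave\mathcal U)).
\]
Applying Theorem \ref{mainresult} twice, the source is an HDA model of $(\S\interleave\T)\interleave\mathcal U$, and the target is an HDA model of $\S\interleave(\T\interleave\mathcal U)$. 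I would then check that both composites restrict in degrees $\leq 1$ to the image under $U$ of the associator $(\S\interleave\T)\interleave\mathcal U\to\S\interleave(\T\interleave\mathcal U)$ of the monoidal category $\mathsf{TS_{\ltimes}}$; granting this, Proposition \ref{exmor} applied to this associator forces the two composites to coincide. The left and right unit axioms are handled in exactly the same way, with the associator replaced by the relevant unitor of $\mathsf{TS_{\ltimes}}$.

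The step I expect to be the main obstacle is precisely this reduction of each coherence axiom to the level of underlying transition systems. Here one uses that $\mathsf{PCS}$, $\mathsf{HDA_{ext}}$, and $\mathsf{TS_{\ltimes}}$ are all monoidal for mutually compatible tensor products and that the functors $U$ and $T_n$ intertwine these structures: the associators and unitors of $\mathsf{HDA_{ext}}$ restrict in degrees $\leq 1$ to those of $\mathsf{TS_{\ltimes}}$, $\mathscr{A}$ restricts to $U$ on underlying systems, and $\varphi$ restricts to the identity. Consequently the degree-$\leq 1$ restriction of each composite collapses to the corresponding coherence isomorphism of the already monoidal category $\mathsf{TS_{\ltimes}}$, so the two restrictions agree. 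The subsidiary claim that iterated tensor products of HDA models are again HDA models of the correspondingly iterated interleavings is immediate from Theorem \ref{mainresult}, and I would record it explicitly rather than unwind the indexing by hand.
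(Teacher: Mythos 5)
Your proposal is correct and follows essentially the same route as the paper, which simply exhibits the unique natural isomorphism $\mathscr{A}(\S)\otimes\mathscr{A}(\T)\xrightarrow{\cong}\mathscr{A}(\S\interleave\T)$ supplied by Theorem \ref{mainresult} and Corollary \ref{isomodel} and leaves the coherence verifications to the reader; your use of the uniqueness clause of Proposition \ref{exmor} to collapse each coherence diagram to the corresponding coherence isomorphism of $\mathsf{TS_{\ltimes}}$ is exactly the intended argument. The only detail worth adding is that the single vertex of the monoidal unit must be declared a final state, since $F_{\A\otimes\B}=F_\A\times F_\B$.
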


\begin{exa}
Consider the system $(\P_1,\P_2)$ of Example \ref{specex}. Since $\T_{\P_i}$ has no independence squares, $\mathscr{A}(\T_{\P_i}) = U(\T_{\P_i})$. Since $\T_{\P_1,\P_2} \cong \T_{\P_1} \interleave  \T_{\P_2}$, it follows that $\mathscr{A}(\T_{\P_1,\P_2})$ is isomorphic to the tensor product $U(\T_{\P_1})\otimes U(\T_{\P_2})$, which geometrically is a torus. 
\end{exa}

\section{Nondeterministic choice} \label{SecChoice}

The coproduct of $\ltimes$-transition systems models the nondeterministic sum of shared-variable systems. We show that the coproduct of the HDA models of two $\ltimes$-transition systems is an HDA model of their coproduct. 

\subsection{Coproduct of HDAs}

The coproduct of two HDAs $\A$ and $\B$ is the HDA $\A+\B$ where
\begin{itemize}
	\item $P_{\A+\B}$ is the precubical subset of $P_\A \otimes P_\B$ defined by \[(P_{\A + \B})_m = {(\{I_\A\} \times (P_\B)_m)} \cup ((P_\A)_m \times \{I_\B\}) \quad (m \geq 0);\]
	\item $I_{\A+\B} = (I_\A,I_\B)$;
	\item $F_{\A+\B} = (\{I_\A\} \times F_\B) \cup (F_\A \times \{I_\B\})$;
	\item $\Sigma_{\A+\B} = \Sigma_\A \amalg \Sigma_\B$; and 
	\item $\lambda_{\A+\B}(x,y) = \left\{ \begin{array}{ll}
	\lambda_\B(y), & (x,y) \in \{I_\A\}\times (P_\B)_1,\\
	\lambda_\A(x), & (x,y) \in (P_\A)_1\times \{I_\B\}.	
	\end{array}\right.$ 	
\end{itemize} 
If $\A$ and $\B$ are extensional, then so is $\A + \B$. Moreover, in this case $\A+\B$ is the coproduct of $\A$ and $\B$ in $\mathsf{HDA_{ext}}$.  

\subsection{Coproduct of \texorpdfstring{$\ltimes$}{}-transition systems}

The coproduct of two $\ltimes $-transition systems $\S$ and $\T$ is the $\ltimes $-transition system $\S + \T$ where $U(\S + \T) = U(\S) + U(\T)$ and $\alpha \ltimes_{\S \interleave \T} \beta$ if either 
\begin{enumerate}
	\item $\alpha, \beta \in \Sigma_\S$ and $\alpha\ltimes_\S \beta$; or
	\item $\alpha, \beta \in \Sigma_\T$ and $\alpha\ltimes_\T \beta$.
\end{enumerate}

\subsection{Nondeterministic sum of shared-variable systems}
Let $Var$ be a set of variables, and let $\eta_0 \in Ev(Var)$ be an initial evaluation. We say that a shared-variable system $(\P_1, \dots, \P_n)$ over $Var$ is the \emph{nondeterministic sum} with respect to $\eta_0$ of two shared-variable systems $(\P^1_1, \dots, \P^1_{n_1})$ and $(\P^2_1, \dots, \P^2_{n_2})$ over $Var$ if $\T_{\P_1, \dots, \P_n} \cong \T_{\P^1_1, \dots, \P^1_{n_1}} + \T_{\P^2_1, \dots, \P^2_{n_2}}$.

\begin{exa}
Consider the shared-variable system $(\P_1, \P_2)$ of Example \ref{specex} and modify the function $g$ of $\P_1$ and $\P_2$ by setting $g(0,x{\scriptstyle{++}},1) = \{0\}$. While this has no effect on $\T_{\P_1}$ and $\T_{\P_2}$, $\T_{\P_1,\P_2}$ is altered in that the state $(1,1,2)$ and the transitions involving it disappear and $\ltimes_{\T_{\P_1,\P_2}}$ becomes the empty relation. Therefore one now has $\T_{\P_1,\P_2} \cong \T_{\P_1} + \T_{\P_2}$, and so $(\P_1,\P_2)$ is the nondeterministic sum of the systems $(\P_1)$ and $(\P_2)$.
\end{exa}

\subsection{HDA models of coproducts} Let $\S$ and $\T$ be $\ltimes$-transition systems, and let $\A$ and $\B$ be HDA models of $\S$ and $\T$, respectively.

\begin{thm} \label{sum}
	$\A + \B$ is an HDA model of $\S + \T$.
\end{thm}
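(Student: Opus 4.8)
The plan is to verify the four defining conditions HM1–HM4 for $\A + \B$ with respect to $\S + \T$, where the first three are direct and the maximality condition HM4 carries the real content. For HM1, since $\A_{\leq 1} = U(\S)$ and $\B_{\leq 1} = U(\T)$, restricting the coproduct construction to degrees $\leq 1$ yields $U(\S) + U(\T) = U(\S + \T)$, and the initial state, final states, labels and relation all match by definition. For HM2, I would note that $(P_{\A+\B})_2 = (\{I_\A\} \times (P_\B)_2) \cup ((P_\A)_2 \times \{I_\B\})$ contains no \emph{mixed} squares from $(P_\A)_1 \times (P_\B)_1$, so a $2$-cube is either $(I_\A,y)$ or $(x,I_\B)$; its two front faces then lie in a single factor, and HM2 for $\B$ (resp.\ $\A$) together with case (2) (resp.\ (1)) of the definition of $\ltimes_{\S+\T}$ supplies the required relation. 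For HM3, two $m$-cubes ($m\geq 2$) with equal boundaries must lie in the same factor — the mixed possibility is excluded because $d^0_1(I_\A,y)$ and $d^0_1(x,I_\B)$ land in the distinct bidegrees $(0,m-1)$ and $(m-1,0)$ of the tensor product — after which HM3 for $\A$ or $\B$ finishes. These verifications parallel Proposition \ref{conditions} and are routine.

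The main work is HM4. Following the pattern of Section \ref{secHDAinter}, I would fix an HDA $\C$ satisfying HM1–HM3 for $\S + \T$ with $\A + \B \subseteq \C$ and prove $(P_\C)_m = (P_{\A+\B})_m$ by induction on $m$, the cases $m\leq 1$ being HM1. Given $c\in (P_\C)_m$ with $m\geq 2$, the inductive hypothesis places every face $d^k_ic$ in $(P_{\A+\B})_{m-1}$. The crucial lever — and the feature making the coproduct easier than the tensor product — is that $\ltimes_{\S+\T}$ relates labels only within $\Sigma_\S$ or within $\Sigma_\T$. By Proposition \ref{alphaprop} (applicable since $\C$ satisfies HM1 and HM2 for $\S+\T$), the spine edges $e_i = d^0_1\cdots d^0_{i-1}d^0_{i+1}\cdots d^0_mc$ satisfy $\lambda_\C(e_i)\ltimes_{\S+\T}\lambda_\C(e_j)$ for $i<j$; hence all the labels $\lambda_\C(e_i)$ lie in one component, say $\Sigma_\S$ (the $\Sigma_\T$ case being symmetric). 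Transporting these labels to the spine edges of each face via Lemma \ref{paralleledges}, I would conclude that no face can be of the form $(I_\A,\beta)$ with $\deg\beta\geq 1$ — such a face would carry $\Sigma_\T$-labels — so every face lies in $(P_\A)_{m-1}\times\{I_\B\}$.

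Writing $d^k_ic = (x^k_i, I_\B)$, the precubical identities for $c$ translate into the compatibility $d^k_i x^l_j = d^l_{j-1}x^k_i$ required by Proposition \ref{filling}; in the case $m=2$ the extra label hypotheses follow from Lemma \ref{paralleledges} and HM2 for $\C$, the relation being genuinely $\ltimes_\S$ since both front labels lie in $\Sigma_\S$. As $\A$ is an HDA model of $\S$, Proposition \ref{filling} yields a unique $x'\in (P_\A)_m$ with $d^k_i x' = x^k_i$, so that $(x', I_\B)\in (P_{\A+\B})_m \subseteq (P_\C)_m$ has the same boundary as $c$; condition HM3 for $\C$ then forces $c = (x', I_\B)\in (P_{\A+\B})_m$, completing the induction and hence HM4. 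The main obstacle is precisely this maximality step — above all the \emph{purity} argument pinning every higher cube of $\C$ to a single factor — after which the filling and the appeal to HM3 are routine.
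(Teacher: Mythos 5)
Your proposal is correct and follows essentially the same route as the paper: direct verification of HM1--HM3, then HM4 by induction on degree, pinning every cube of the ambient HDA $\C$ to a single factor and concluding via Proposition \ref{filling} and condition HM3 for $\C$. The only variation is in the purity step, where you argue uniformly in $m$ via the spine-edge labels using Proposition \ref{alphaprop} and Lemma \ref{paralleledges}, whereas the paper uses HM2 for $m=2$ and the precubical identities for $m>2$; both arguments are valid.
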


\begin{proof}
(HM1) $(\A + \B)_{\leq 1} = \A_{\leq 1} + \B_{\leq 1} = U(\S)+U(\T) = U(\S+\T)$.

(HM2) Let $(x,y) \in (P_{\A+\B})_2$. If $(x,y) \in \{ I_\S\} \times (P_\B)_2$, then \[\lambda_{\A+\B}(d^0_2(x,y))  =  \lambda_\B(d^0_2y) \ltimes_\T \lambda_\B(d^1_2y) = \lambda_{\A + \B}(d^1_2(x,y)).\] Hence $\lambda_{\A+\B}(d^0_2(x,y)) \ltimes_{\S+\T} \lambda_{\A + \B}(d^1_2(x,y))$. A similar argument shows that this also holds if $(x,y) \in (P_\A)_2 \times \{I_\T\}$. 

(HM3) Let $m\geq 2$ and $(a,b), (x,y) \in (P_{\A+\B})_m$ such that $d^k_r(a,b) = d^k_r(x,y)$ for all $r \in \{1, \dots, m\}$ and $k \in \{0,1\}$. Suppose first that $(a,b) \in \{ I_\S\} \times (P_\B)_m$. Then also $(x,y) \in \{ I_\S\} \times (P_\B)_m$. Indeed, if this was not the case, then we would have $(x,y) \in (P_\A)_m\times \{ I_\T\}$ and $(I_\S,d^0_1b) = d^0_1(a,b) = d^0_1(x,y) = (d^0_1x,I_\T)$. Since $\{I_\S\}\times (P_\B)_{m-1}$ and $(P_\A)_{m-1}\times \{I_\T\}$ are disjoint subsets of $(P_{\A+\B})_{m-1}$, this is impossible. Thus $(x,y) \in \{ I_\S\} \times (P_\B)_m$. It follows that $d^k_rb = d^k_ry$ for all $r \in \{1, \dots, m\}$ and $k \in \{0,1\}$. By property HM3 for $\B$, $b=y$. Thus $(a,b) = (x,y)$. A similar argument shows that $(a,b) = (x,y)$ if $(a,b) \in (P_\A)_m \times \{ I_\T\}$. 

(HM4) Let $\C$ be an HDA that satisfies HM1-HM3 for $\S + \T$ and contains $\A+\B$ as a subautomaton. We show by induction that for all $r\geq 1$, $\C_{\leq r} = (\A + \B)_{\leq r}$. We have $\C_{\leq 1} = U(\S+\T) = (\A +\B)_{\leq 1}$. 
Let $r \geq 2$, and suppose that $\C_{\leq r-1} = (\A + \B)_{\leq r-1}$. Let $c \in (P_\C)_r$. We show that $c \in (P_{\A +\B})_r$. By the induction hypothesis, we have $d^k_ic \in (P_{\A +\B})_{r-1}$ for all $k \in \{0,1\}$ and $i \in \{1, \dots, r\}$. 

We show first that either for all $k$ and $i$, $d^k_ic \in {\{I_\S\}\times (P_\B)_{r-1}}$  or for all $k$ and $i$, $d^k_ic \in {(P_\A)_{r-1}\times \{I_\T\}}$. If $r=2$, then this follows from the fact that, by property HM2, $\lambda_\C(d^0_2c) \ltimes_{\S+\T} \lambda_\C(d^0_1c)$. Indeed, this implies that either all $\lambda_\C(d^k_ic) \in \Sigma_\S$ or all $\lambda_\C(d^k_ic) \in \Sigma_\T$, which is only possible if either all $d^k_ic \in {\{I_\S\}\times (P_\B)_{r-1}}$ or all $d^k_ic \in {(P_\A)_{r-1}\times \{I_\T\}}$. Suppose that $r > 2$. If $d^k_1c \in {\{I_\S\}\times (P_\B)_{r-1}}$, then for all $j \in \{2, \dots, r\}$ and $l \in\{0,1\}$, $d^k_1d^l_jc = d^l_{j-1}d^k_1c \in \{I_\S\} \times (P_\B)_{r-2}$ and therefore $d^l_jc \in \{I_\S\} \times (P_\B)_{r-1}$. Similarly, if $d^k_1c \in {(P_\A)_{r-1}\times \{I_\T\}}$,  then for all $j \in \{2, \dots, r\}$ and $l \in\{0,1\}$, $d^l_jc \in (P_\A)_{r-1}\times \{I_\T\}$. It follows that either $d^k_ic \in {\{I_\S\}\times (P_\B)_{r-1}}$ for all $k, i$ or $d^k_ic \in {(P_\A)_{r-1}\times \{I_\T\}}$ for all $k, i$.  

Suppose that all $d^k_ic \in {\{I_\S\}\times (P_\B)_{r-1}}$. Write $d^k_ic = (I_\S, y^k_i)$. For $k,l \in\{0,1\}$ and $1 \leq i < j \leq r$, $(I_\S, d^k_iy^l_j) = d^k_id^l_jc = d^l_{j-1}d^k_ic = (I_\S,d^l_{j-1}y^k_i)$ and hence $d^k_iy^l_j = d^l_{j-1}y^k_i$. If $r=2$, we also have $\lambda_\B(y^0_i) = \lambda_\C(d^0_ic) = \lambda_\C(d^1_ic) = \lambda_\B(y^1_i)$ for all $i\in\{1,2\}$. Moreover, $\lambda_\C(d^0_2c) \ltimes_{\S+\T} \lambda_\C(d^0_1c)$ and therefore $\lambda_\B(y^0_2) \ltimes_\T  \lambda_\B(y^0_1)$. By Proposition \ref{filling}, there exists an element $y \in (P_\B)_r$ such that $d^k_iy = y^k_i$ for all $k,i$. Hence for all $k,i$, $d^k_ic = (I_\S, y^k_i) = d^k_i(I_\S,y)$. By property HM3, it follows that $c = (I_\S,y) \in (P_{\A+\B})_r$. An analogous argument shows that $c \in (P_{\A+\B})_r$ if all $d^k_ic \in (P_\A)_{r-1}\times \{I_\T\}$. 
\end{proof}

\begin{rem}
	By the universal property of the coproduct, there exists a unique morphism of HDAs $\mathscr{A}(\S) + \mathscr{A}(\T) \to \mathscr{A}(\S+\T)$ that restricts to the (obvious) morphisms $\mathscr{A}(\S \to \S+\T)$ and $\mathscr{A}(\T \to \S+\T)$. The morphism $(\mathscr{A}(\S) + \mathscr{A}(\T))_{\leq 1} \to \mathscr{A}(\S+\T)_{\leq 1}$ is the identity on $U(\S+\T)$. By Theorem \ref{sum} and Corollary \ref{isomodel}, it follows that $\mathscr{A}(\S) + \mathscr{A}(\T) \to \mathscr{A}(\S+\T)$ is an isomorphism. Therefore the functor $\mathscr{A}$ preserves (binary) coproducts. 
\end{rem}

\section{Concluding remarks} \label{SecFinal}

We have introduced $\ltimes$-transition systems to model shared-variable programs and presented a method to construct higher-dimensional automata from them. We have established that the construction  is compositional in the sense that interleaving and nondeterministic choice are modeled by the tensor product and the coproduct of HDAs, respectively.

The squares and higher-dimensional cubes in an HDA represent independence of actions. The difference between a filled and an empty cube is a topological one, and this strongly suggests to use topological methods to analyze the independence structure of HDAs. An approach to this based on homology is described in \cite{labels}. In the HDA models considered in the present paper the independence concept for actions is induced by a binary relation. Since this does not extend to sequences of actions, these HDAs have usually nonetheless a nontrivial higher independence structure and topology.

The main objective of this paper was to provide a framework in which one can actually prove the intuitively obvious fact that the tensor product of HDAs models the parallel composition of independent concurrent systems. Unfortunately, the tensor product of HDAs has the undesirable property not to be symmetric, i.e., normally, $\A \otimes \B \not\cong \B \otimes \A$. One way to overcome this problem is to consider a larger class of morphisms in the category of HDAs. In previous work this has already been advocated as a means to compare HDAs of different refinement levels \cite{weakmor, topabs}. It turns out that the \emph{weak morphisms} considered in those articles are still too rigid to make the tensor product symmetric. The more flexible \emph{cubical dimaps} introduced in \cite{labels} seem, however, to be a suitable choice of morphisms for a symmetric monoidal category of higher-dimensional automata. This issue will be a subject of future work.


\providecommand{\bysame}{\leavevmode\hbox to3em{\hrulefill}\thinspace}
\providecommand{\MR}{\relax\ifhmode\unskip\space\fi MR }
\providecommand{\MRhref}[2]{%
  \href{http://www.ams.org/mathscinet-getitem?mr=#1}{#2}
}
\providecommand{\href}[2]{#2}

\end{document}